\newtheorem{theorem}{Theorem}[section]
\newtheorem{proposition}[theorem]{Proposition}
\newtheorem{corollary}[theorem]{Corollary}
\theoremstyle{definition}
\newtheorem{definition}[theorem]{Definition}
\theoremstyle{remark}
\newtheorem{remark}[theorem]{Remark}
\newcommand*{\remaend}{\hfill\text{$\diamond$}}
\numberwithin{equation}{section}
\newcommand*{\N}{\mathbb{N}}
\newcommand*{\R}{\mathbb{R}}
\renewcommand*{\epsilon}{\varepsilon}
\newcommand*{\defined}{\triangleq}
\newcommand*{\E}{\mathbb{E}}
\newcommand*{\Cov}{\mathop{\rm Cov}}
\newcommand*{\Cor}{\mathop{\rm Cor}}
\newcommand*{\Var}{\mathop{\rm Var}}
\newcommand*{\filF}{\mathscr{F}}
\newcommand*{\Jc}{\mathbb{I}}
\newcommand{\iidsim}{\mathop{\stackrel{\rm\scriptscriptstyle i.i.d.}{\sim}}}
\newcommand*{\data}{{\rm data}}
\newcommand*{\simu}{{\rm sim}}
\newcommand*{\redeem}{{\rm redeem}}
\begin{document}

\title[GPD Processes and Liquidity]{Generalized Pareto Processes\linebreak and Liquidity}

\author{Sascha~Desmettre$^{\dagger}$ $\cdot$ Johan~de~Kock$^{*}$ $\cdot$ Peter Ruckdeschel$^{\ddagger}$ $\cdot$ Frank~Thomas~Seifried$^{\S}$}

\address{$^\dagger$ \small Institute of Stochastics, Karlsruhe Institute of Technology (KIT)\linebreak Englerstra\ss{}e 2, 76131 Karlsruhe, Germany; sascha.desmettre@kit.edu;\\
Department of Mathematics, University of Kaiserslautern\linebreak Erwin-Schr\"odinger Stra\ss{}e, 67663 Kaiserslautern, Germany}
\address{$^*$ \small Liberty Life: Libfin Markets, Claremont Central \linebreak
Claremont, Cape Town, 7735, South Africa; johan.dekock@liberty.co.za}
\address{$^\ddagger$ \small Institute for Mathematics, University of Oldenburg, P.O.\ Box 2503, 26111 Oldenburg, Germany; peter.ruckdeschel@uni-oldenburg.de}
\address{$^\S$ \small Department IV -- Mathematics, University of Trier\linebreak Universit\"atsring 19, 54296 Trier, Germany; seifried@uni-trier.de}

\date{\today; \textit{Corresponding author:} Peter Ruckdeschel}

\begin{abstract} Motivated by the modeling of liquidity risk in fund management in a dynamic setting, we propose and investigate a class of time series models with generalized Pareto marginals:
the autoregressive generalized Pareto process (ARGP), a modified ARGP (MARGP) and a thresholded ARGP (TARGP).
These models are able to capture key data features apparent in fund liquidity data and reflect the underlying phenomena via easily interpreted, low-dimensional model parameters.
We establish stationarity and ergodicity, provide a link to the class of shot-noise processes, and determine the associated interarrival distributions for exceedances.
Moreover, we provide estimators for all relevant model parameters and establish consistency and asymptotic normality for all estimators (except the threshold parameter, which as usual must be dealt with separately).
Finally, we illustrate our approach using real-world fund redemption data, and we discuss the goodness-of-fit of the estimated models.\\

\noindent {\sc Key Words:} ARGP process $\cdot$  GPD $\cdot$ liquidity risk $\cdot$ data features\\

\noindent {\sc MSC (2010):} 60G70, 62P05 
\end{abstract}

\maketitle

\section{Introduction}\label{SectIntroduction}

\paragraph{\textit{Motivation: Liquidity Risk in Fund Management.}}
In general, liquidity risk refers to the risk that cash or other liquid means of payment are not available when they are required, or only at increased cost.
In the context of fund management, a particular focus is on calling risk, i.e.\ the risk of unplanned withdrawals such as, for instance, early redemptions of shares in a mutual fund.
In that context, \cite{Fiedler2000} established the notions of expected and dynamic liquidity-at-risk (ELaR and DyLaR) as similar concepts to the value-at-risk (VaR), modeling the extremes of the expected cash liquidity as quantiles of the underlying distributions.

Over the past few years, professional management of redemption risks has become an important regulatory requirement, both in the European Union and in the United States; see the corresponding guidelines \cite{UCITS}, \cite{AIFM}, \cite{SEC}.
In \cite{DesmettreDeege2016} the authors propose a static model to quantify these redemption risks specific to mutual funds via the {\it liquidity at risk (LaR)}, which they adapted from the measurement of daily net cahsflows specific to the banking sector (compare \cite{Zeranski06}), and which complies with regulatory guidelines.
The liquidity at risk is based on the well-known peaks-over-threshold approach from extreme value theory (see, for instance, \cite{Embrechts}), providing a liquidity reserve that is not exceeded with a certain probability.

This paper is concerned with a {\it dynamic} model for fund liquidity risks that retains the key distributional features of the static approach, including GPD marginals.

\paragraph{\textit{Feature-Based Modeling and ARGP Processes.}}
One natural theoretical approach for a dynamic GPD model is to provide a functional (or process) version of the Pickands-Balkema-de-Haan theorem (see, e.g., \cite{Embrechts}) to construct a discrete- or continuous-time GPD process as a limiting object.
This route has been taken successfully by \cite{FerreiraHaan2012} and \cite{DombryRibatet2015}.
In view of the applications we aim for, the analysis of this article takes a more data-driven perspective, based on the {\it feature-based modeling approach} of P.~L.~Davies \citep{Davies1995}.
This approach has successfully been pursued in many different domains, compare, e.g., \cite{Dumb:98,WWLK:05,Li:Liu09,Hen:10}, and can be summarized as follows:
First, one identifies reproducible, pertinent features of observed realizations.
Second, one designs parsimonious statistical models that are able to reproduce these features in a natural way.

Concerning potential parametrizations of our models, we follow Occam's razor (or the parsimony principle of statistics) and head for statistical models with low-dimensional and easily interpreted parametrizations.
We are thus lead to consider a Markov process, the autoregressive generalized Pareto (ARGP) process, which naturally generalizes the Pareto processes introduced by \cite{YehArnoldRobertson1988}.
To add flexibility in the modeling of increasing sequences of exceedances, we also introduce the modified ARGP (MARGP) process.
To be able to model censored data, as observed in the context of, e.g., fund redemptions, we further define a thresholded variant of the (M)ARGP process (TARGP).

\paragraph{\textit{Related Models.}}
To the best of our knowledge, there are no canonical dynamic models for funds' liquidity risks, or for dynamic GPD processes.
The ARGP models we propose in this article are inspired by, and can be seen as a natural generalization of, the Pareto processes introduced by \cite{YehArnoldRobertson1988}; see Corollary~\ref{Corof}~(b) below.
ARGP processes also share some structural features with shot-noise processes that run backwards in time.
Shot-noise processes have been introduced in a financial risk context by \cite{KlueppelbergMikosch95} and further generalized by \cite{SchmidtStute2007} and \cite{AltmannSchmidtStute2008}; applications to, e.g., credit risk, can be found in \cite{GasparSchmidt2005,GasparSchmidt2007,GasparSchmidt2010} and \cite{SchererSchmidSchmidt2012}.
In contrast to these continuous-time models, the ARGP class is formulated in discrete time, and it guarantees GPD marginals.

\paragraph{\textit{Organization of the Article.}}
In Section~\ref{DataFeatures} we identify the relevant data features in realized fund redemptions.
Based on this, in Section~\ref{SectARGPProcesses} we introduce ARGP processes as natural statistical models for these time series.
Thus we define the ARGP process in Section~\ref{SectARGPProcess}, the MARGP process in Section~\ref{carmouflage}, and the TARGP process in Section~\ref{SectTruncARGP}.
Moreover we establish the key theoretical properties of these processes, including stationarity, marginal distributions, and ergodicity, and we clarify the link to shot-noise processes.
Section~\ref{InterARR} determines the interarrival distributions for the ARGP process and its variants.
In Section~\ref{ParamEst} we provide estimators for the model parameters of the ARGP processes, and we establish consistency and joint asymptotic normality of all proposed estimators (except the GPD threshold parameter, which as usual is to be dealt with separately).
In Section~\ref{SectResults} we illustrate our model and methodology in a benchmark setting using real-world fund redemption data, and Section~\ref{SectConclusion} concludes.

\newpage

\section{Data Features of Fund Redemption Time Series}\label{DataFeatures}

\citet{Davies1995} introduces \textit{data features} as a fundamental notion for statistical modeling.
Formally, data features can be regarded as functionals from the underlying distributions to some feature space.
Since in applications one wishes to infer these features from historical data, the domains of these data feature functionals should contain the observed empirical distributions and, with a view towards inference, should ensure that the empirical distribution is close to the model distribution, i.e.\ the data feature functional should be weakly continuous.
As a basis for the models proposed in Section~\ref{SectARGPProcesses}, in this section we identify several key data features in typical time series of fund redemptions.
We refrain from formally specifying the feature space and checking weak continuity, as it should be clear that this is feasible.

\paragraph{\textit{Fund Redemption Cashflows.}}
We consider a typical time series of cashflow redemptions in a popular open end mutual fund as displayed in Figure~\ref{fig:Example}.

\begin{figure}[hb]
\includegraphics[width=\textwidth]{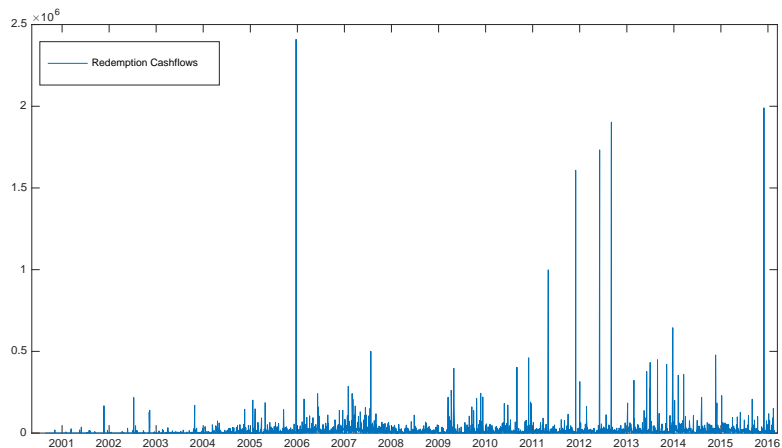}
\caption{Typical redemption cashflow time series for a mutual fund.}\label{fig:Example}
\end{figure}

We note that the data is characterized by different regimes of fund redemptions; regimes of low or moderate redemption cashflows are followed by regimes of extraordinarily high redemptions.
In the second regime, the data in fact exceeds very high thresholds, and it appears that the distribution of exceedances is invariant when the threshold is increased.
We take this invariance property as the first data feature, and we subsequently refer to it as \textsf{DF1}.

Second, possibly excluding the initial time period, the process of redemptions is apparently stationary in time.
We therefore adopt stationarity as the second data feature and refer to it as \textsf{DF2}.

Third, upon investigating exceedances more closely, it becomes apparent that we can distinguish two subregimes (see Figure~\ref{fig:DF}), and that otherwise exceedances look strikingly similar.
The fact that the process appears to switch randomly between these two subregimes is taken as data feature \textsf{DF3}.
As exemplified by Figure~\ref{fig:DF}, the subregimes are characterized either by (a) a deterministic build-up over time (data feature \textsf{DF4}), followed by a sudden and sharp decrease (data feature \textsf{DF5}); or (b) an isolated, single shock (data feature \textsf{DF6}).
The behavior in subregime~(a) is reminiscent of a shot-noise process run backward in time; we return to this link in Section~\ref{SectARGPProcess} below.

\begin{figure}[ht]
\includegraphics[width=0.495\textwidth,height=4cm]{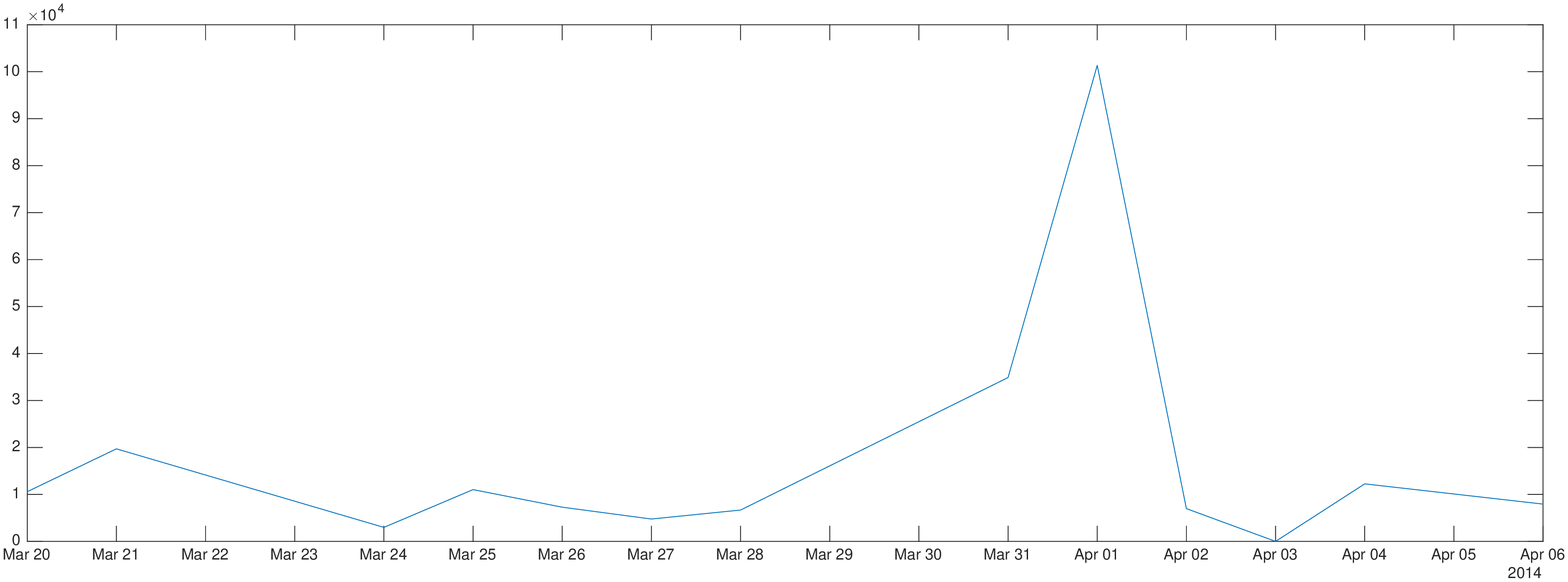}
\includegraphics[width=0.495\textwidth,height=4cm]{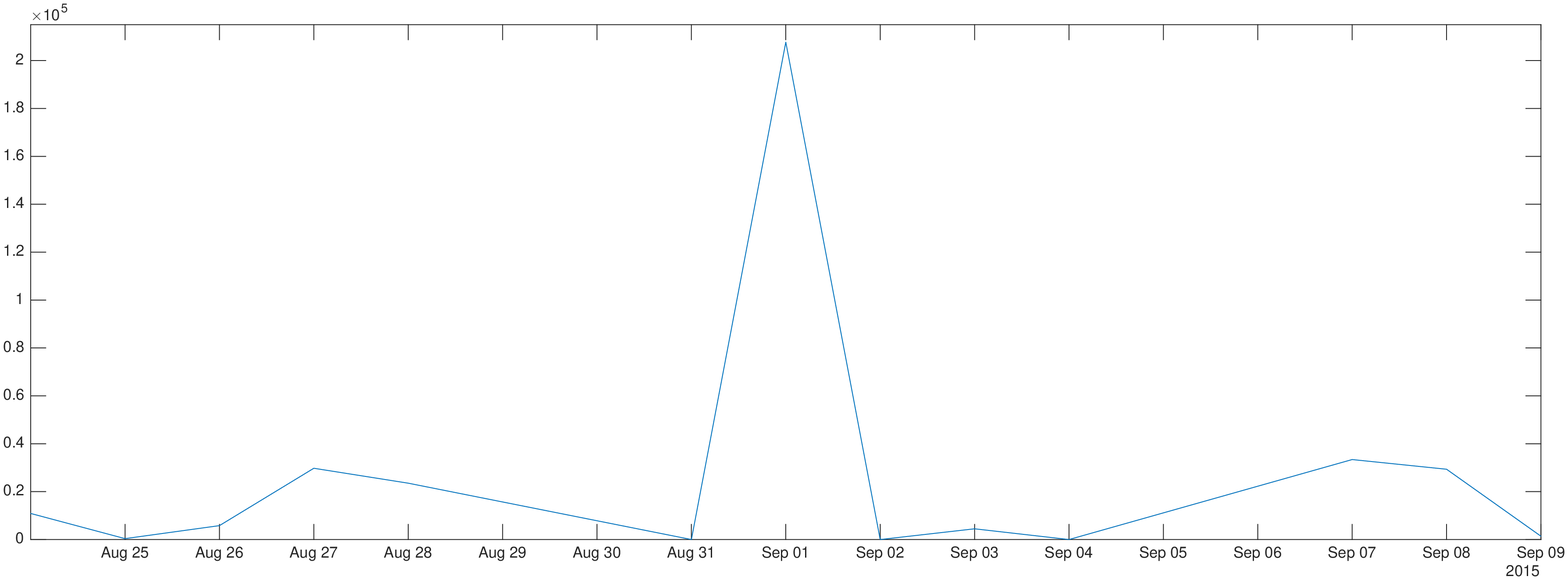}
\caption{Illustration of data features \textsf{DF4} and \textsf{DF5} (left panel) and \textsf{DF6} (right panel).}\label{fig:DF}
\end{figure}

An additional feature of the fund redemptions process is the fact that it can attain zero, reflecting business days without redemption cashflows.
We note this as data feature \textsf{DF7}.
Note that, when applied to fund liquidity risk, the ARGP process as defined in Definition~\ref{Def:ARGP} below covers only the process of exceedances \textit{after} thresholding at a threshold level $u\in\R$.
In a dynamic setting as considered in this paper, the interarrival times of such exceedances are relevant as well.
Hence we do not wish to discard the information that a non-exceedance has been observed at time $t$, but rather observe a censored pair $(\Jc(X>u),(X-u)_+)$, where for inference on the GPD only the non-censored observations, i.e.\ the instances $\Jc(X>0)=1$ are relevant, while for the discussion of interarrival times the runs of consecutive censorings $\Jc(X>0)=0$ carry information as well.
To combine both coordinates in a one-dimensional statistic, we can without loss restrict attention to the values $(X-u)_+$.

This concludes our discussion of pertinent data features.
In the next section, we propose statistical models that are able to match \textsf{DF1}-\textsf{DF7} both qualitatively and quantitatively.
Clearly, these are not the only reasonable models to capture the above data features; the aim of this article is to construct and analyze a parsimonious class of models that are suitable for this purpose.
Moreover, although \textsf{DF1}-\textsf{DF7} have been derived from, and feature prominently in, time series of fund redemptions, the following mathematical and statistical analysis applies whenever time series data share these basic features.

\newpage

\section{Processes Matching the Data Features}\label{SectARGPProcesses}

\subsection{The ARGP Process}\label{SectARGPProcess}

\paragraph{\textit{GPD Marginals.}}
In view of the Pickands-Balkema-de-Haan theorem, data feature \textsf{DF1} strongly suggests modeling the marginals using GPD distributions.
We briefly recall the definition:
\begin{definition}[GPD]\label{Def:GPD}
The cumulative distribution function (cdf) of the generalized Pareto distribution (GPD) with shape parameter $\xi\in\mathbb{R}$ and scale parameter $\sigma>0$ is given by
\[
G_{\xi,\sigma}(x) \defined
\begin{cases}
\displaystyle
  1 - \left(1 + \frac{\xi\,x}{\sigma}\right)^{-1/\xi}\,&\mbox{if}\quad\xi\neq0\,,\\
\displaystyle
 1-e^{-x/\sigma}\,&\mbox{if}\quad\xi=0\,,
\end{cases}
\quad \text{for }x\in D(\xi,\sigma)\,,
\]
where
\[
D(\xi,\sigma) \defined
\begin{cases}
\displaystyle
  [0,\infty) \,&\mbox{if}\quad\xi\geq0\,,\\
\displaystyle
  [0,-\sigma/\xi]\,&\mbox{if}\quad\xi <0\,.
\end{cases}
\]
\end{definition}
Concerning the ranges of the underlying model parameters, in view of the applications we wish to consider, there is no loss in assuming that
\[
\xi > -1/2\,.
\]
Under this condition, we can give a coherent treatment of, e.g., statistical inference, without having to consider particular cases.

In some references in the literature, an unknown threshold $u\in\R$ is included as an additional parameter, so the observations $X_i$ are of the form $X_i=X_i^{(0)}+u$ where $X_i^{(0)} \sim G_{\xi,\sigma}$.
We return to this discussion in the context of the truncated ARGP model in Section~\ref{SectTruncARGP}; until then, we assume that $u=0$.

\paragraph{\textit{Dynamics and Definition of the ARGP Process.}}
Returning to the data features in Section~\ref{DataFeatures}, we now address the dynamic features of exceedances identified in the data.

First, \textsf{DF3} can be accommodated by a binomial switching mechanism that, at each point of time, randomly selects between the two regimes.
The monotone build-up in subregime~(a) reflected in \textsf{DF4} can be modeled in terms of a deterministic, strictly increasing function $f$ that acts on the previous observation $X_{t-1}$.
On the other hand, the purely random exceedances \textsf{DF6} in subregime~(b) can be captured by an independent random variable $\epsilon_t$.
Finally, the harsh decrease following a sequence of subsequent increases in subregime~(a), codified as \textsf{DF5}, can be modeled by a second, competing risk mechanism that compares the current level of the process with another stochastically independent variable $\bar\epsilon_t$ and specifies the new value of the process as the minimum of the theoretical value of the process and $\bar\epsilon_t$.

Note that, since $\epsilon_t$ and $\bar\epsilon_t$ are only relevant in subregimes~(b) and (a), respectively, and since they are assumed to be independent, we can condense them into a single sequence of independent variables $\epsilon_t$.
Thus we arrive at the following Markovian recursion capturing data features \textsf{DF3}--\textsf{DF6}:
\begin{equation}\label{eq:ARGP}
X_t\defined U_t f(X_{t-1}) + (1-U_t)\min\left\{f(X_{t-1}), \epsilon_t\right\}
\end{equation}
where $\epsilon_t, X_{t-1}\sim G_{\xi,\sigma}$ and $U_t\sim{\rm Bernoulli}(\beta)$, and where $\epsilon_t, X_{t-1},U_t$ are independent.
The increasing function $f$ remains to be specified.

We now address the appropriate specification of $f$ so as to maintain stationarity. Denote the uniform distribution on $[0,1]$ by ${\rm unif}[0,1]$.
Since $G=G_{\xi,\sigma}$ is continuous, we can without loss of generality transform $X_t$ and $\epsilon_t$ to variables $X^\ast_t\defined G(X_t)$ and $\epsilon_t^\ast\defined G(\epsilon_t)$ 
distributed according to ${\rm unif}[0,1]$ and obtain
\begin{equation}\label{eq:unifX}
X^\ast_t = U_t f^\ast(X^\ast_{t-1}) + (1-U_t)\min\left\{f^\ast(X^\ast_{t-1}), \epsilon^\ast_t\right\}
\end{equation}
where $f^\ast\defined F\circ f\circ F^{-1}$, and we use the fact that $G(\min\{x,y\})=\min\{G(x),G(y)\}$.
Using \eqref{eq:unifX} we can show that $\{X_t\}$ has $G$ marginals;
more generally, the following result shows that a variant of the above Markovian recursion leads to a stationary process with any given marginal distribution.
\begin{proposition}\label{fdefProp}
Let $F$ be a continuous cdf, and let $F^{-1}$ denote its quantile function.
Suppose $X_0\sim F$, $\epsilon_t\iidsim F$ are independent, and
\[
f\defined F^{-1}\circ f^\ast \circ F\quad\text{where}\quad f^\ast(u)\defined\frac{u}{(1-\beta)u+\beta}\, ,\quad u\in[0,1]\,.
\]
Then the process defined in equation~\eqref{eq:ARGP} has $F$ marginals.
Moreover, $f$ is uniquely determined by this condition on $\{x\in\R\, :\, 0<F(x)<1\}$.
\end{proposition}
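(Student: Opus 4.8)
The plan is to reduce everything to the uniform case via the probability integral transform, carry out a single explicit distributional computation, and then read off both the marginal statement and uniqueness from it. First I would observe that, exactly as in the passage leading to \eqref{eq:unifX}, setting $X_t^\ast \defined F(X_t)$ and $\epsilon_t^\ast \defined F(\epsilon_t)$ and using continuity of $F$ (so that $F(F^{-1}(u)) = u$ on $(0,1)$ and $F^{-1}(F(X)) = X$ almost surely) together with $F(\min\{x,y\}) = \min\{F(x),F(y)\}$ turns \eqref{eq:ARGP} into \eqref{eq:unifX} with $f^\ast = F\circ f\circ F^{-1}$; under the stated choice of $f$ this is $f^\ast(u) = u/((1-\beta)u+\beta)$. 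Since $X_0\sim F$ gives $X_0^\ast\sim{\rm unif}[0,1]$ and $X_t = F^{-1}(X_t^\ast)$ almost surely, it suffices to prove the single invariance statement that $X_{t-1}^\ast\sim{\rm unif}[0,1]$ implies $X_t^\ast\sim{\rm unif}[0,1]$; the claim on $F$ marginals then follows by induction.

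For the invariance step, write $V \defined X_{t-1}^\ast$ and $E\defined\epsilon_t^\ast$, both ${\rm unif}[0,1]$ and independent of $U\defined U_t\sim{\rm Bernoulli}(\beta)$, and set $W\defined X_t^\ast$. Conditioning on $U$ gives
\[
\P(W\le w) = \beta\,\P(f^\ast(V)\le w) + (1-\beta)\,\P\big(\min\{f^\ast(V),E\}\le w\big), \qquad w\in[0,1].
\]
Inverting $f^\ast$ yields $(f^\ast)^{-1}(w) = \beta w/(1-(1-\beta)w)$, so that $\P(f^\ast(V)\le w) = (f^\ast)^{-1}(w)$ because $V$ is uniform, and in particular $\P(f^\ast(V) > w) = (1-w)/(1-(1-\beta)w)$. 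For the minimum term I would use independence, $\P\big(\min\{f^\ast(V),E\} > w\big) = \P(f^\ast(V)>w)\,\P(E>w) = (1-w)^2/(1-(1-\beta)w)$. Substituting both expressions and simplifying over the common denominator $1-(1-\beta)w$ collapses the right-hand side to $w$, which is exactly the assertion $W\sim{\rm unif}[0,1]$.

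For uniqueness, I would run the same computation for an arbitrary strictly increasing candidate $g^\ast$ in place of $f^\ast$. Writing $H(w)\defined\P(g^\ast(V)\le w)$ for the CDF of $g^\ast(V)$, the identity $\P\big(\min\{g^\ast(V),E\}\le w\big) = 1-(1-H(w))(1-w)$ and conditioning on $U$ give
\[
\P(W\le w) = \beta H(w) + (1-\beta)\big[w + H(w)(1-w)\big].
\]
Setting this equal to $w$ and solving the resulting linear equation for $H$ forces $H(w) = \beta w/(1-(1-\beta)w) = (f^\ast)^{-1}(w)$. Since $V$ is uniform, the law of $g^\ast(V)$ determines the increasing function $g^\ast$ uniquely, whence $g^\ast = f^\ast$ on $(0,1)$; transforming back via $f = F^{-1}\circ f^\ast\circ F$ then pins down $f$ on $\{x\in\R : 0<F(x)<1\}$.

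The computations are routine, so the main obstacle is not the algebra but the careful bookkeeping of generalized inverses: the identities used in the reduction to \eqref{eq:unifX} hold precisely because $F$ is continuous, which is exactly why the uniqueness conclusion is stated only on $\{0<F(x)<1\}$. A secondary subtlety in the uniqueness step is the claim that an increasing function is recovered from the law of its image under a uniform variable; here no continuity of $g^\ast$ need be assumed in advance, since a flat of $g^\ast$ would produce an atom and a jump would produce a flat in $H$, both excluded by the strict monotonicity and continuity of the target $(f^\ast)^{-1}$.
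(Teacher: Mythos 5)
Your proof is correct and follows essentially the same route as the paper's: reduce to uniform marginals via the probability integral transform, condition on $U_t$, and use the explicit inverse $(f^\ast)^{-1}(w)=\beta w/\bigl(1-(1-\beta)w\bigr)$ to collapse the distributional identity to $w$, then induct. The only differences are cosmetic (you work with cdfs where the paper uses survival functions) and a welcome elaboration of the uniqueness step, which the paper dispatches with ``it is clear from the above'' while you spell out why the law of $g^\ast(V)$ for uniform $V$ pins down a strictly increasing $g^\ast$.
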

\begin{proof}
$\{X_t\}$ has $F$ marginals if and only if $\{X^\ast_t\}$ has ${\rm unif}[0,1]$ marginals; similarly, $\epsilon_t\sim F$ if and only if $\epsilon^\ast_t\sim{\rm unif}[0,1]$.
Suppose by induction that $X^\ast_{t-1}\sim{\rm unif}[0,1]$, and note that by independence
\begin{align*}
P(X_t^\ast > u) &= P(U_t=1,\,f^\ast(X^\ast_{t-1})>u)\\
&\qquad + P(U_t=0,\,f^\ast(X^\ast_{t-1})>u,\,\epsilon^\ast_t>u)\\
&= \beta P(f^\ast(X^\ast_{t-1}) > u) + (1-\beta)P(f^\ast(X^\ast_{t-1}) > u)(1-u)\\
&= P(f^\ast(X^\ast_{t-1}) > u) [1-(1-\beta)u] = 1-u
\end{align*}
where we use the fact that
\begin{equation}\label{Equafstarinverse}
P(f^\ast(X^\ast_{t-1}) > u) = 1-(f^\ast)^{-1}(u) = \frac{1-u}{1-(1-\beta)u}\,,\quad u\in[0,1].
\end{equation}
It follows that $X^\ast_t\sim{\rm unif}[0,1]$ and $X_t\sim F$, as asserted.
It is clear from the above that $(f^\ast)^{-1}$ is uniquely determined on $(0,1)$, hence so is $f^\ast$; thus $f$ is uniquely determined on $\{0<F<1\}$.
\end{proof}

Note that equation~\eqref{eq:ARGP} provides a method to construct processes with
arbitrary marginals specified by an increasing, continuous cdf including, e.g., Weibull, GEV or Gamma marginals.
In the case of generalized Pareto and Pareto marginals, we explicitly determine the relevant function $f$ in the following result.
\begin{corollary}\label{Corof}
\begin{enumerate}
\item[(a)] For the generalized Pareto cdf $G_{\xi,\sigma}$ we have
\begin{equation}\label{fdef}
f(x) = \tfrac{\sigma}{\xi} \left\{ \left( 1 + \tfrac{1}{\beta}\left[ (1+\tfrac{\xi x}{\sigma})^{\frac{1}{\xi}}-1\right] \right)^\xi - 1 \right\}\quad \text{for }x\in D(\xi,\sigma).
\end{equation}
\item[(b)] For the Pareto cdf $\tilde G_{\xi,\sigma}(x)=1-(1+(x/\sigma)^{1/\xi})^{-1}$ we obtain
\[
f(x) = \beta^{-\xi} x,\quad x\in[0,\infty).
\]
\end{enumerate}
\end{corollary}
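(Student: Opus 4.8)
The plan is to apply Proposition~\ref{fdefProp} directly: since the marginal cdf is $F=G_{\xi,\sigma}$ in part~(a) (resp.\ $F=\tilde G_{\xi,\sigma}$ in part~(b)), the associated function $f$ is given by the composition $f=F^{-1}\circ f^\ast\circ F$ with $f^\ast(u)=u/[(1-\beta)u+\beta]$. The corollary thus reduces to (i) writing down the relevant quantile function $F^{-1}$ and (ii) simplifying the resulting composition. No probabilistic input beyond Proposition~\ref{fdefProp} is needed; the content is purely a change-of-variables computation.

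For part~(a) I would first invert the GPD cdf, solving $u=1-(1+\xi y/\sigma)^{-1/\xi}$ for $y$ to obtain $G_{\xi,\sigma}^{-1}(u)=\tfrac{\sigma}{\xi}[(1-u)^{-\xi}-1]$. The key simplification is to record the identity
\[
1-f^\ast(u)=\frac{\beta(1-u)}{\beta+(1-\beta)u}\,,
\]
which shows that $f^\ast$ is most naturally expressed through the survival function $1-u$. Setting $u=G_{\xi,\sigma}(x)$ and using $(1-u)^{-\xi}=1+\tfrac{\xi x}{\sigma}$ (equivalently $1-u=(1+\tfrac{\xi x}{\sigma})^{-1/\xi}$), one computes $(1-f^\ast(u))^{-\xi}=\bigl(1+\tfrac1\beta[(1+\tfrac{\xi x}{\sigma})^{1/\xi}-1]\bigr)^{\xi}$, and substituting this into $G_{\xi,\sigma}^{-1}(f^\ast(u))=\tfrac{\sigma}{\xi}[(1-f^\ast(u))^{-\xi}-1]$ yields \eqref{fdef} at once.

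For part~(b) I would proceed analogously: inverting $\tilde G_{\xi,\sigma}$ gives $\tilde G_{\xi,\sigma}^{-1}(u)=\sigma\,(u/(1-u))^{\xi}$, so the quantile function depends on $u$ only through the odds ratio $u/(1-u)$. Combining the two expressions for $f^\ast$ and $1-f^\ast$ above gives the clean identity $f^\ast(u)/[1-f^\ast(u)]=\tfrac1\beta\cdot u/(1-u)$, whence $f(x)=\sigma\bigl(\tfrac1\beta\,u/(1-u)\bigr)^{\xi}=\beta^{-\xi}\,\sigma\,(u/(1-u))^{\xi}=\beta^{-\xi}x$, since $\sigma(u/(1-u))^{\xi}=x$ when $u=\tilde G_{\xi,\sigma}(x)$.

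There is essentially no hard step here; the only points requiring care are the bookkeeping of exponents and the degenerate case $\xi=0$ in part~(a), which is not covered by the displayed formula and must be treated separately using $G_{0,\sigma}^{-1}(u)=-\sigma\ln(1-u)$ (or recovered as the limit $\xi\to0$). One should also note that the formulas are valid precisely on $\{x:0<F(x)<1\}$, matching the uniqueness range in Proposition~\ref{fdefProp}, and extend to the boundary of $D(\xi,\sigma)$ by continuity.
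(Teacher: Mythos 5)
Your proposal is correct and follows exactly the route the paper intends: the corollary is a direct computation from Proposition~\ref{fdefProp}, inserting the GPD (resp.\ Pareto) quantile function into $f=F^{-1}\circ f^\ast\circ F$ and simplifying via the survival-function identity $1-f^\ast(u)=\beta(1-u)/(\beta+(1-\beta)u)$ (resp.\ the odds-ratio identity); the paper itself omits this calculation entirely. Your additional remarks on the degenerate case $\xi=0$ and on extending from $\{0<F<1\}$ to the closure $D(\xi,\sigma)$ by continuity are careful touches that go slightly beyond what the paper records.
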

Note that in the situation of part (b), the construction \eqref{eq:ARGP} recovers the classical ARP(1) process of \cite{YehArnoldRobertson1988}.
In the following, we focus on the GPD distribution and assume that
\[
f: D(\xi,\sigma)\to\R\quad \text{is defined by~\eqref{fdef}.}
\]
Note that $D(\xi,\sigma)$ is the closure of $\{0<G_{\xi,\sigma}<1\}$.
With this specification, the preceding construction gives rise to the ARGP process:\footnote{
Note that Definition~\ref{Def:ARGP} does not require that $X_0\sim G_{\xi,\sigma}$.}
\begin{definition}[ARGP Process]\label{Def:ARGP}
Suppose that $U_t \iidsim {\rm Bernoulli}(\beta)$ and $X_0$ and $\epsilon_t \iidsim G_{\xi,\sigma}$ are independent, and define $X_t$ recursively by
\[
X_t\defined U_t f(X_{t-1}) + (1-U_t)\min\left\{f(X_{t-1}), \epsilon_t\right\}\quad\text{for }t\in\N.
\]
Then the process $\{X_t\}_{t\in\N_0}$ is called an \textit{autoregressive generalized Pareto} (\textit{ARGP}) process with parameters $\xi,\sigma,\beta$.
\end{definition}

Note that by construction, the frequencies of regimes~(a) and (b) are controlled by the binomial probability $\beta$.
High values of $\beta$ produce frequent build-up phases, i.e.\ regime~(a), while low values of $\beta$ are more likely to lead to singular exceedances, i.e.\ regime~(b).

\paragraph{\textit{Stationarity.}}
We now address \textsf{DF2}, i.e.\ stationarity of the ARGP process.
In fact, we can show that the ARGP process is ergodic:
\begin{proposition}\label{alphamix}
If $X_0\sim G_{\xi,\sigma}$ then $\{X_t\}$ is stationary.
For $\beta<1$ and any distribution of $X_0$, the process $\{X_t\}$ is $\alpha$-mixing, hence asymptotically stationary and ergodic.
\end{proposition}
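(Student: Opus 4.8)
The plan is to regard $\{X_t\}$ as a time-homogeneous Markov chain and to treat the two assertions separately, carrying out all probabilistic estimates in the uniform representation~\eqref{eq:unifX}. For \emph{stationarity}, note that the recursion in Definition~\ref{Def:ARGP} applies the same update map with i.i.d.\ innovations $(U_t,\epsilon_t)$ at every step, so $\{X_t\}$ is a time-homogeneous Markov chain with a transition kernel $P$ independent of $t$. Proposition~\ref{fdefProp} states precisely that $G_{\xi,\sigma}$ is invariant for $P$, and a time-homogeneous chain launched from an invariant law is strictly stationary; the first claim follows at once. For $\beta<1$ I would obtain $\alpha$-mixing through the standard route of a small-set minorization together with a Foster--Lyapunov drift condition, yielding geometric ergodicity and hence exponential $\beta$- and $\alpha$-mixing.

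In the uniform coordinates the chain lives on $[0,1]$ with update $f^\ast(u)=u/((1-\beta)u+\beta)$ and uniform innovations $\epsilon^\ast_t$, and the minorization comes from the competing-risk term: on $\{U_t=0\}$ (probability $1-\beta>0$) one has $\min\{f^\ast(x),\epsilon^\ast_t\}=\epsilon^\ast_t$ whenever $\epsilon^\ast_t<f^\ast(x)$, so for every $\eta\in(0,1)$ and all $x\in[\eta,1]$,
\[
P(x,\cdot)\ \ge\ (1-\beta)\,\nu(\cdot),\qquad \nu(\cdot)\defined \mathrm{Leb}\big(\cdot\cap[0,f^\ast(\eta))\big).
\]
Thus $C_\eta\defined[\eta,1]$ is a small set: on a positive-probability event the chain is reset to the fresh innovation $\epsilon^\ast_t$, independently of its past, and the same absolutely continuous component makes the chain Lebesgue-irreducible and aperiodic on $(0,1)$.

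The remaining and genuinely delicate point is the left endpoint, where the minorization degenerates because $f^\ast(u)\to0$ as $u\to0$; indeed $0$ is a fixed point of $f^\ast$, an absorbing $\mathrm{Leb}$-null state that one discards from the essential state space (so ``any distribution of $X_0$'' is read on the interior of the support). Here I would exploit the upward push $f^\ast(u)>u$ on $(0,1)$ by testing a Lyapunov function $V(u)=u^{-\gamma}$ with $\gamma\in(0,1)$ small: near the origin $f^\ast(u)\sim u/\beta$, so the deterministic step contracts $V$ multiplicatively, $V(f^\ast(u))\sim\beta^\gamma V(u)$, while the reset contributes only the bounded term $\E[(\epsilon^\ast_t)^{-\gamma}]=1/(1-\gamma)$, leading to a geometric drift inequality $PV\le \beta^\gamma V+b\,\mathbf{1}_{C_\eta}$ for a suitable $b<\infty$. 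I expect this drift estimate—checking that the bounded reset contribution cannot overcome the contraction as $u\to0$—to be the main obstacle, precisely because the resetting probability is not bounded below globally and a plain Doeblin condition is unavailable.

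Combining the small set $C_\eta$ with this drift condition, the Meyn--Tweedie criterion gives geometric ergodicity; geometric ergodicity of a positive Harris chain implies exponential $\beta$-mixing, a fortiori $\alpha$-mixing, of the stationary version. Geometric ergodicity also yields $\|P^n(x,\cdot)-G_{\xi,\sigma}\|_{\mathrm{TV}}\to0$ for every admissible initial state, so for an arbitrary law of $X_0$ the chain is asymptotically stationary; and since an $\alpha$-mixing stationary process is ergodic, the limiting process is ergodic, which completes the argument.
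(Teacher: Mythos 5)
Your argument is correct, but it reaches the $\alpha$-mixing claim by a genuinely different route than the paper. The stationarity step is the same (a time-homogeneous kernel started in its invariant law, via Proposition~\ref{fdefProp}; the paper phrases this through Breiman's results). For mixing, however, the paper argues directly on the mixing coefficients: dependence between $\filF_0^n$ and $\filF_{n+h}^\infty$ can persist only while the chain follows the deterministic branch, and once the iterates $(f^\ast)^n(X^\ast_t)$ are close to $1$ each step fails to reset with probability at most roughly $\beta+\delta$, giving $\alpha_h\le P(X_{t+h}=f^h(X_t))\le C(\beta+\delta)^h$. You instead run the Meyn--Tweedie program: the reset mechanism yields the minorization $P(x,\cdot)\ge(1-\beta)\,\mathrm{Leb}(\cdot\cap[0,f^\ast(\eta)))$ on the small set $[\eta,1]$, and the degeneracy at the origin is handled by the Lyapunov function $V(u)=u^{-\gamma}$. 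The drift step you flag as the main obstacle does go through: $V(f^\ast(u))=((1-\beta)u+\beta)^\gamma V(u)$ and the reset contributes at most $(1-\beta)/(1-\gamma)$, so for $u<\eta$ one gets $PV(u)\le\bigl[((1-\beta)\eta+\beta)^\gamma+(1-\beta)\eta^\gamma/(1-\gamma)\bigr]V(u)$, and the bracket tends to $\beta^\gamma<1$ as $\eta\downarrow0$; hence $PV\le\lambda V+b\,\Jc(u\in[\eta,1])$ with $\lambda<1$, as you anticipated. Comparing the two: the paper's estimate is short, elementary and self-contained, but it is sketchy precisely where your machinery is exact --- after a reset the new value is $\epsilon_t$ conditioned on $\{\epsilon_t<f(X_{t-1})\}$, so regeneration is only approximate (this residual dependence is what the paper's $\delta$ absorbs), whereas your minorization makes regeneration exact via Nummelin splitting and delivers strictly stronger conclusions (geometric ergodicity, exponential $\beta$-mixing, total-variation convergence from every interior state). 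You also make explicit a caveat the paper suppresses: $0$ is absorbing ($f^\ast(0)=0$), so ``any distribution of $X_0$'' must exclude laws mixing an atom at $0$ with other mass --- for such a law $\{X_h=0\}=\{X_0=0\}$ up to null sets, whence $\alpha_h\ge P(X_0=0)P(X_0>0)$ for all $h$ and mixing fails; your restriction to the interior of the support is the right reading, and it is implicitly needed for the paper's bound $P(\epsilon^\ast_{t+n}>(f^\ast)^n(X^\ast_t))\le\delta$ as well. The only loose end on your side is that you prove $\alpha$-mixing for the stationary version and asymptotic stationarity for arbitrary (interior) initial laws, while the proposition asserts mixing for the chain as started; this follows from your total-variation convergence by a routine uniformity argument, and the paper is no more explicit on that point either.
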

\begin{proof}
Since $X_0\sim G_{\xi,\sigma}$, Proposition~\ref{fdefProp} implies that all marginals are $G_{\xi,\sigma}$.
We have $\{X_0,\epsilon_s\}\iidsim G_{\xi,\sigma}$, and this family is also independent from $\{U_t\}\iidsim {\rm Bernoulli}(\beta)$.
Moreover, we have $X_t=g(U_t,X_{t-1},\epsilon_t)$ with the deterministic function
\[
g(u,x,e) \defined (1-u)f(x)+u\min\{f(x),e\}.
\]
Thus Propositions~6.4 and 6.31 in \cite{Breiman92} apply, and it follows that $\{X_t\}$ is stationary and ergodic.
To show that $\{X_t\}$ is $\alpha$-mixing, let $\filF_0^m\defined \sigma(X_0,\ldots,X_m)$, $\filF_m^\infty\defined\sigma(X_m,X_{m+1},\ldots)$ and
\[
\alpha_h \defined \sup \{|P(A\cap B)-P(A)P(B)| \,:\, A\in \filF_0^n,\ B\in\filF_{n+h}^\infty,\ n\in\N\}.
\]
Then for each $\delta>0$ there is some $N\in\N$ such that
\[
P(\epsilon^\ast_{t+n}>(f^\ast)^n(X^\ast_t))\leq \delta\quad \text{for all }n>N
\]
where $f^n=f\circ\,\cdots\,\circ f$ ($n$ times).
Hence for some constant $C=C(N,f^\ast)$
\[
\alpha_h\leq P(X_{t+h}=f^h(X_t)) \leq C (\beta+\delta)^h\quad \text{for all }h\in\N.
\]
Thus $\{X_t\}$ is $\alpha$-mixing (or strongly mixing), and in particular asymptotically stationary and ergodic.
\end{proof}

\paragraph{\textit{Build-up Properties of the ARGP Process.}}
The following result shows that the ARGP process replicates data feature \textsf{DF4}:
\begin{proposition}\label{increaseprop}
The function $f: D(\xi,\sigma)\to\R$ is differentiable.
For $\beta>0$ it is strictly increasing, and for $\beta<1$ it is also strictly larger than the identity except for the left endpoint $0$.
In particular, we have
\[
P(X_t>f(X_{t-1}))=0
\]
and
\begin{equation}\label{determIncrease}
P(X_t>X_{t-1},\, X_t\not=f(X_{t-1}))=0.
\end{equation}
\end{proposition}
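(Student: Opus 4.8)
The plan is to push every property through the explicit auxiliary map $f^\ast(u)=u/((1-\beta)u+\beta)$ on $[0,1]$ and then transport it back through the identity $f=F^{-1}\circ f^\ast\circ F$ with $F=G_{\xi,\sigma}$, exactly as in Proposition~\ref{fdefProp}. First I would record the two elementary facts about $f^\ast$: its derivative is $(f^\ast)'(u)=\beta/((1-\beta)u+\beta)^2$, strictly positive precisely when $\beta>0$; and $f^\ast(u)-u=(1-\beta)\,u(1-u)/((1-\beta)u+\beta)$, strictly positive on $(0,1)$ precisely when $\beta<1$. Since $G_{\xi,\sigma}$ is a strictly increasing $C^1$ bijection from the interior of $D(\xi,\sigma)$ onto $(0,1)$ with strictly positive density, the chain rule yields differentiability of $f$ on the interior, while the explicit formula~\eqref{fdef} (a composition of power and affine maps) handles the boundary directly. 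Composing the strictly increasing $f^\ast$ with the increasing maps $F$ and $F^{-1}$ shows $f$ is strictly increasing for $\beta>0$; composing $f^\ast>\mathrm{id}$ with $F^{-1}$ gives $f(x)>x$ on the interior for $\beta<1$. The left endpoint is the genuine exception: $F(0)=0$ forces $f(0)=F^{-1}(f^\ast(0))=F^{-1}(0)=0$.

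The first probabilistic assertion is immediate and does not even use these analytic properties: from the defining recursion, if $U_t=1$ then $X_t=f(X_{t-1})$, and if $U_t=0$ then $X_t=\min\{f(X_{t-1}),\epsilon_t\}\le f(X_{t-1})$; hence $X_t\le f(X_{t-1})$ surely, so $P(X_t>f(X_{t-1}))=0$.

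The real work — and where I expect the main obstacle — is~\eqref{determIncrease}. The natural decomposition is on whether the build-up is realized exactly: one checks that $X_t=f(X_{t-1})$ holds on $\{U_t=1\}\cup\{U_t=0,\ \epsilon_t\ge f(X_{t-1})\}$, so the complementary event $\{X_t\ne f(X_{t-1})\}$ coincides with the \emph{reset} event $\{U_t=0,\ \epsilon_t<f(X_{t-1})\}$, on which $X_t=\epsilon_t$. To conclude~\eqref{determIncrease} one must then show that a reset never produces a strict increase, i.e.\ that $\epsilon_t\le X_{t-1}$ on this event. Here is the delicate point: because $f(X_{t-1})>X_{t-1}$ by the above-identity property just established, the independent draw $\epsilon_t$ could a priori fall in the band $(X_{t-1},f(X_{t-1}))$ and push $X_t$ strictly above $X_{t-1}$ without equalling $f(X_{t-1})$. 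Establishing~\eqref{determIncrease} therefore amounts to showing that the event $\{U_t=0,\ X_{t-1}<\epsilon_t<f(X_{t-1})\}$ contributes nothing, and I expect this to be the technically delicate step, hinging on the precise interplay between the transition map $f$ (equivalently $f^\ast$) and the $G_{\xi,\sigma}$ marginal rather than on the crude bound $X_t\le f(X_{t-1})$ alone.
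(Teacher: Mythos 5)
Your analytic arguments (differentiability and strict monotonicity via $(f^\ast)'(u)=\beta/((1-\beta)u+\beta)^2$, the identity $f^\ast(u)-u=(1-\beta)u(1-u)/((1-\beta)u+\beta)$, transport through $G$ and $G^{-1}$, and the endpoint $f(0)=0$) coincide with the paper's proof, and your derivation of $P(X_t>f(X_{t-1}))=0$ directly from the recursion is also the paper's argument. The problem is the step you postpone. There is no delicate argument that kills the band event, because it is not null: $\epsilon_t$ is independent of $(U_t,X_{t-1})$ with continuous cdf $G$ charging all of $D(\xi,\sigma)$, and $f>\mathrm{id}$ on the interior (which you proved), so conditioning on $X_{t-1}$ gives, in the stationary regime $X_{t-1}\sim G_{\xi,\sigma}$ and for $\beta\in(0,1)$,
\[
P\bigl(U_t=0,\ X_{t-1}<\epsilon_t<f(X_{t-1})\bigr)
=(1-\beta)\,\E\bigl[f^\ast(X^\ast_{t-1})-X^\ast_{t-1}\bigr]
=(1-\beta)^2\int_0^1\frac{u(1-u)}{(1-\beta)u+\beta}\,du\;>\;0,
\]
e.g.\ $3/4-\ln 2\approx 0.057$ for $\beta=1/2$ (stationarity is inessential; any law of $X_{t-1}$ charging the interior gives positivity). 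On this event $X_t=\epsilon_t\in(X_{t-1},f(X_{t-1}))$, so $X_t>X_{t-1}$ and $X_t\neq f(X_{t-1})$. Hence your plan cannot be completed: your (correct) reduction of $\{X_t\neq f(X_{t-1})\}$ to the reset event is precisely what exposes the failure.

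For comparison, the paper's own proof does not close this gap either. Its final sentence---$X_t$ must coincide with $f(X_{t-1})$ or with $\epsilon_t$, and in the latter case $\epsilon_t<f(X_{t-1})$---proves only the first display, $P(X_t>f(X_{t-1}))=0$, and \eqref{determIncrease} does not follow from it. By the computation above, \eqref{determIncrease} is false as literally written for $0<\beta<1$; it holds only in the degenerate case $\beta=1$, or if one reads the event as $\{X_t\ge f(X_{t-1}),\ X_t\neq f(X_{t-1})\}$, in which case it merely restates the first display. So the honest outcome of your analysis is not a missing technical lemma but a counterexample: the claim needs to be weakened or reformulated (note this also affects its later use, e.g.\ in the proof of Proposition~\ref{prop34}, where the analogous event is asserted to be supported on $\{W_t=0\}$).
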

As a consequence, $\{X_t\}$ increases deterministically along iterated evaluations of $f$.

\begin{proof}
Since $G=G_{\xi,\sigma}$ and $G^{-1}$ are strictly increasing and differentiable $D(\xi,\sigma)$ and $[0,1]$, respectively, we can focus on $f^\ast$.
It is clear that $f^\ast$ is differentiable on $[0,1]$; to see that $f^\ast$ is strictly increasing, note that $(f^\ast)'(u)=\beta/(\beta+(1-\beta)u)^2>0$.
Similarly, for $u\in(0,1)$ and $\beta<1$, we have $0<\beta+(1-\beta)u<1$ so $f^\ast(u)=u/(\beta+(1-\beta)u)>u$.
Since $G^{-1}(1)=\infty$, $f$ is strictly larger than the identity.
By definition, $X_t$ must either coincide with $f(X_{t-1})$ or with $\epsilon_t$; in the latter case, however, due to the presence of the min-operator, $\epsilon_t$ must be smaller than $f(X_{t-1})$.
\end{proof}

\begin{figure}
\includegraphics[angle=-90,width=12cm]{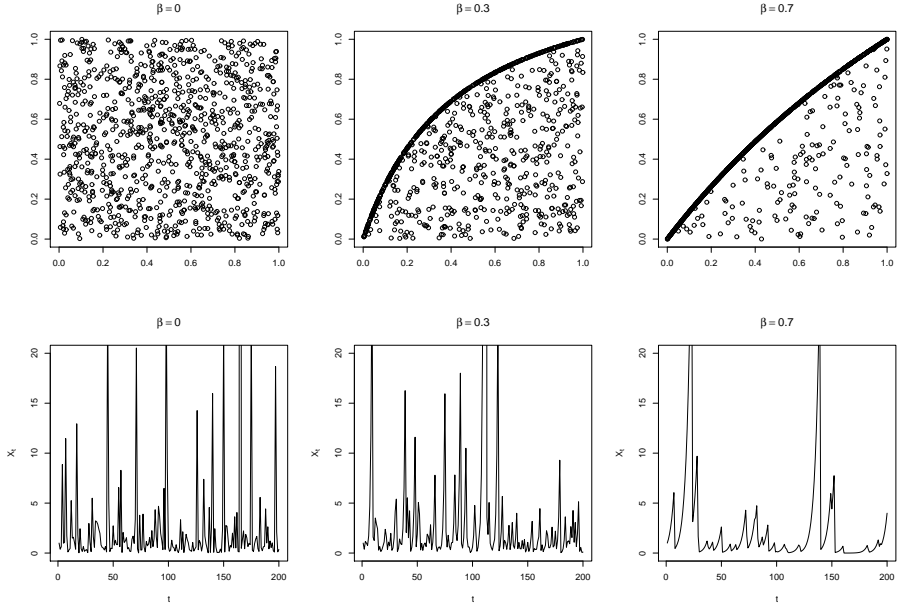}
\caption{Plots of the ARGP process $\{X_t\}$ in terms of lagged, simulated PP plots (i.e.\ simulated values of $G(X_t)$ vs.\ $G(X_{t-1})$), and simulated paths for different values of $\beta$.} \label{Fig2}
\end{figure}

To illustrate property~\eqref{determIncrease}, Figure~\ref{Fig2} displays the probability-integral-transformed subsequent values $G(X_{t-1})$ vs.\ $G(X_t)$
in a lagged PP-plot.
Note that this can be regarded as a visualization of the underlying bivariate copula.
Stationarity and ergodicity warrant that plotting subsequent values is sufficient to represent the (time-invariant) behavior of the two-dimensional marginals.
By property~\eqref{determIncrease}, the area above the curve
$x\mapsto f^\ast(x)$ remains empty.
Note further that this curve, albeit of positive codimension, carries mass.
This is reflected by the fact that there are points lying on the curve.

\paragraph{\textit{Interpretation in terms of Shot-Noise Processes.}}
One of the inherent features of our definition of the ARGP process is its deterministic increase along runs of $U_s=1$, $s=t,t+1,\ldots$
In their spikyness, these stretches of paths resemble shot noise processes, compare \citet{SchmidtStute2007}, i.e.\ a superposition of subsequent shocks whose effects decay through time.

Typically, a shot noise process $\{Y_t\}_{t\in[0,\infty)}$ is defined in a continuous-time setting as a superposition $Y_t=\sum_{i=1}^{N_t} J_i(t-\tau_i)$ of shots $J_i$ that occur at the event times $\tau_i$.
These event times are generated as the jump times of a Cox process $\{N_t\}_{t\in[0,\infty)}$ with intensity $\Lambda_t=\int_0^t \lambda_s\,ds$, so $N_t$ can be represented in terms of a standard Poisson process via a time change, $N_t=\tilde N_{\Lambda_t}$.
In \citet{SchmidtStute2007}, the jump sizes $J_i$ are given as (strong) solutions of SDEs
\[
J_i(t)=J_i(0) + \int_0^t a(s,J_i(s),\eta_i)\,ds + \int_0^t b(s,J_i(s),\eta_i)\,dW_s
\]
where $\eta_i$ is a sequence of i.i.d.\ innovations.

As a discrete-time analog, we replace the Cox process $\{N_t\}$ with a sequence of geometrically distributed holding times.
More precisely, we consider the process $\{V_s\}_{s\in\N_0}$ given by $V_s=\Jc(U_s=0, \epsilon_s<f(X_{s-1}))$ so that on $\{V_s=1\}$ a decrease in the process at time $s$ is triggered.
Second, we replace the jump sizes by a sequence of GPD marginals.
Third, to capture the ``reinitialization of the process'' after an event $\{V_s=1\}$, we generalize the definition of jump sizes so that $J_i$ may depend on past $J_j$.

Then we can embed the ARGP process $\{X_t\}_{t\in\N_0}$ into the framework of \citet{SchmidtStute2007} if we set $\eta_i=(U_i,\epsilon_i)$ and $J_i=J_i(t,x)$, where $J_i(k,x)=f^k(\epsilon_i)-f^k(x)$ and $f^0$ is the identity.
We denote by $\tau_i=\min\{k\colon \sum_{\ell=1}^k V_\ell=i\}$ the time of the $i$-th occurrence of an event $\{V_s=1\}$ and by $N_t=\sum_{\ell=1}^t V_\ell$ the number of events occurred up to time $t$.
Then we have the shot-noise representation
\[
X_t=\sum_{i=1}^{N_t} J_i(t-\tau_i,X_{\tau_i-1}),\quad X_0=0.
\]
Thus the ARGP process not only visually resembles, path by path, a time-reversed shot noise process, but can in fact be interpreted formally as a (generalized) shot noise run backwards in time.

\paragraph{\textit{The Autocorrelation Structure of ARGP Random Variables.}}
In this paragraph, we consider the moments and correlations of subsequent values of the ARGP process.
As $\sigma$ is a simple scale parameter, $X_t/\sigma \sim G_{\xi,1}$, so we may assume without loss that $\sigma=1$.
Under this assumption, the $r$-th moment of $X_t$ is given by ($r>0$)
\[
m_r=\E X_t^r= \int_0^1 G^{-1}(s)\,ds = \xi^{-r}\int_0^1 (y^{-\xi}-1)^r\,dy
\]
where $G=G_{\xi,1}$ and $\xi=0$ is obtained as a limiting case as usual.
Now $m_r$ is finite as long as $r\xi<1$.
In particular, for $\xi<1$ we obtain $m_1=\E X_t=1/(1-\xi)$, and for $\xi<1/2$ we have $m_2=\E X_t^2=2/((1-2\xi)(1-\xi))$; we assume that $\xi<1/2$ for the subsequent discussion.
We first compute $H_2(x)\defined \E[X_t|X_{t-1}^\ast=x]$.
By \eqref{eq:unifX},
\[
H_2(x)=\frac{\beta}{(1-\beta)x+\beta}\,G^{-1}(f^\ast(x))+ (1-\beta) \int_0^{f^\ast(x)} \hspace{-0.2cm}G^{-1}(u) du
\]
and we obtain
\begin{align*}
\Cov[X_t,X_{t-1}] &= \E X_tH_2(G(X_t)) - m_1^2 = \int_0^1 G^{-1}(s)H_2(s)\,ds - m_1^2\,,\\
\Cor[X_t,X_{t-1}] &= \Cov[X_t,X_{t-1}]/(m_2-m_1^2)\, .
\end{align*}

\subsection{The Modified ARGP Process}\label{carmouflage}

As shown in Proposition~\eqref{increaseprop}, increases of the ARGP process occur deterministically along iterated applications of the function $f$ according to
equation~\eqref{determIncrease}, i.e.\ along runs of $\{U_s=1\}\cup\{\epsilon_s> f(X_{s-1})\}$ for $s=t_0,t_0+1,\ldots, t_1$.

If the data suggest that \textsf{DF4} is not required to hold globally, it may be desirable to admit additional regimes where stochastic increases are possible.
This can be achieved naturally within our framework by including an additional switching layer.

Formally, with $G=G_{\xi,\sigma}$ we introduce two further independent processes $\tilde\epsilon_t \iidsim G$ and $W_t\iidsim{\rm Bernoulli}(\gamma)$ and define the {\it modified auto\-regressive generalized Pareto (MARGP) process} $\{\tilde X_t\}_{t\in\N_0}$ by
\begin{align*}
\tilde X_t &\defined (1-W_t)\tilde\epsilon_t + W_t X_t,\\
X_t&\defined U_t f( X_{t-1}) + (1-U_t) \min\{f(X_{t-1}), \epsilon_t\}
\end{align*}
where $\tilde X_0 \sim G$.
If we pass to the transformed values $X^\ast_t=G(X_t)$ and $\tilde X^\ast_t=G(\tilde X_t)$ as in Section~\ref{SectARGPProcess}, we can equivalently represent the MARPG process as
\begin{align*}
\tilde X_t^\ast &= (1-W_t)\tilde\epsilon_t^\ast + W_t X_t^\ast,\\
X_t^\ast &= U_t f^\ast( X^\ast_{t-1}) + (1-U_t) \min\{f^\ast( X^\ast_{t-1}), \epsilon_t^\ast\}
\end{align*}
where $\tilde\epsilon_t^\ast \iidsim {\rm unif}[0,1]$ and $\tilde X_0^\ast \sim {\rm unif}[0,1]$.

It follows as before that $\{X_t^\ast\}$ has uniform marginals, and that $\{X_t\}$ has $G$ marginals.
Considering the lagged PP-plot in Figure~\ref{Fig3}, it becomes apparent that for $\gamma<1$ we have $P(\tilde X_t>f(\tilde X_{t-1}))>0$, i.e.\ there is a non-deterministic increase with positive probability.
Furthermore, this behavior becomes more pronounced for large values of $\gamma$.

\begin{figure}[htb]
\includegraphics[angle=-90,width=12cm]{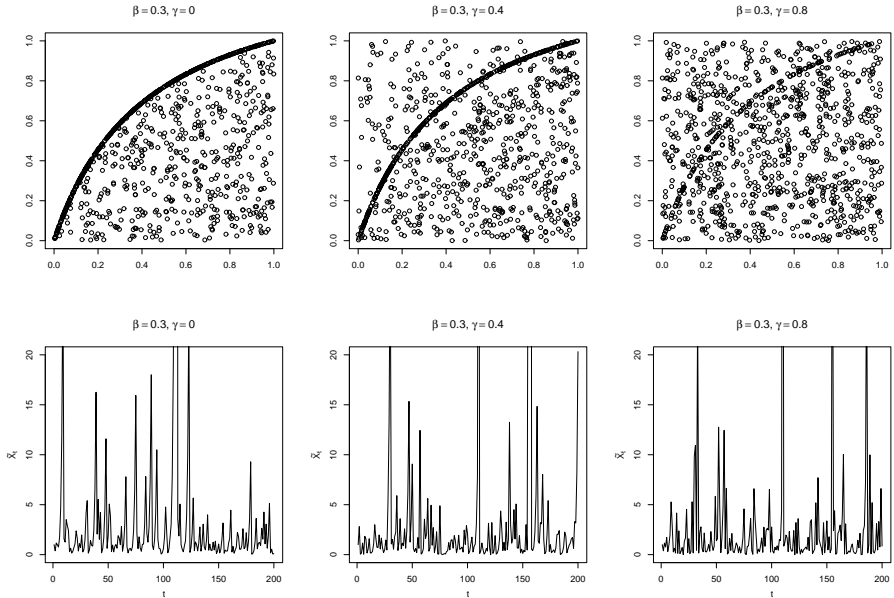}
\caption{Plots of the MARGP process $\{\tilde X_t\}$ in terms of lagged, simulated PP plots and simulated paths for several values of $\gamma$.}\label{Fig3}
\end{figure}

\subsection{The Truncated ARGP Process} \label{SectTruncARGP}

\paragraph{\textit{Attaining Zero as a Data Feature}}
As indicated by data feature~\textsf{DF7}, it may be desirable for a process modeling fund redemptions to be able to attain zero.
This reflects trading days without cashflow redemptions for the corresponding share class.
However, neither the ARGP process of Definition~\ref{Def:ARGP} nor its modified version attain zero.

To accommodate this, recall that by the Theorem of Pickands, Balkema and de Haan (see, e.g., \cite{Embrechts}) there exists a function $\sigma(u)$ such that
\[
\lim_{u\uparrow x_F} \sup_{0 < x < x_F-u} \left| F_u(x) - G_{\xi,\sigma(u)}(x) \right| = 0\,,
\]
where $G_{\xi,\sigma(u)}$ is the generalized Pareto distribution as given in Definition~\ref{Def:GPD}, and $x_F$ is a fixed right endpoint in $(-\infty, \infty]$.
This suggests that we can regard the generalized Pareto distribution, and the (M)ARGP process, as models for exceedances over and above a given, high threshold $u$.

\paragraph{\textit{Truncated ARGP Process}}
Motivated by this, we now introduce a truncated version of the MARGP process that is able to capture this data feature.
Of course, similar considerations also hold for the ARGP process from Section~\ref{carmouflage}.
To avoid repetitions, we focus on the MARGP process in the following; the corresponding considerations for the ARGP process are obtained as the special case $\gamma=P(W_t=1)=1$.

\begin{definition}
For a fixed $u\ge 0$, the {\it truncated autoregressive generalized
Pareto (TARGP) process} $\{V_t\}$ is defined as
\[
V_t  \defined \bigl(\tilde X_t - u\bigr)_+\quad \text{for }t\in\N_0.
\]
\end{definition}

\begin{figure}
\includegraphics[angle=-90,width=12cm]{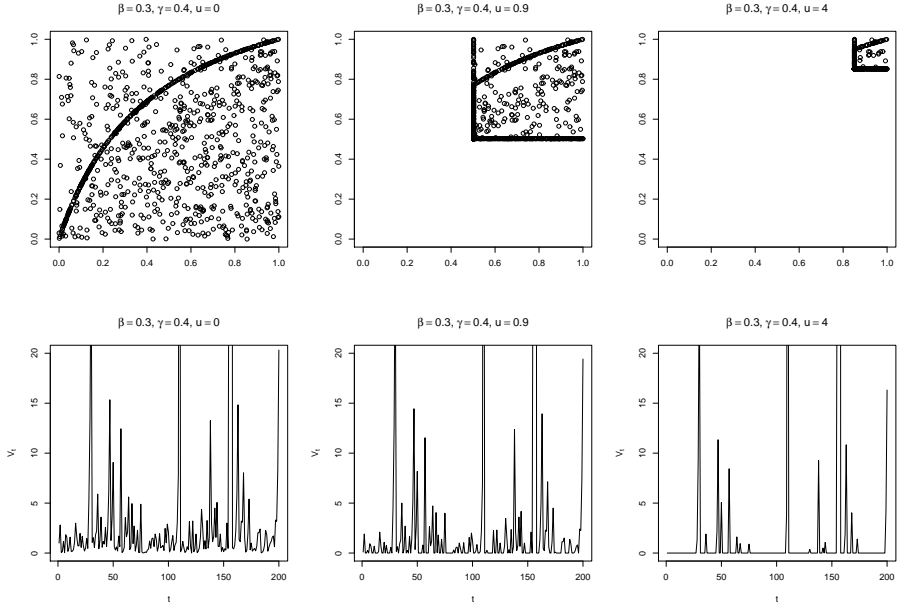}
\caption{Plots of the TARGP process $\{V_t\}$ in terms of lagged, simulated PP plots and as simulated paths for several values of $u$.} \label{Fig4}
\end{figure}

The effect of the censoring is visualized in the lagged PP-plot in Figure~\ref{Fig4}:
With marginal probability $u^\ast$ we have a censoring.
By construction, we do not observe values below $u^\ast$ in $\{V_t^\ast\}$; nevertheless, in the upper panels we display subsequent values of $X_t^\ast$, so at censoring we get placed on the exact probability value of $u^\ast$.
This generates the $L$-shaped corners in the middle and right panels.
This effect, too, becomes more pronounced for larger values $u^\ast$.

\newpage

\section{Interarrival Times}\label{InterARR}

As noted above, for the TARGP process it is possible to derive explicit analytic expressions for the interarrival times of exceedances, i.e.\ of events $\{V_t>0\}$. To this end, defined on the event $\{V_t>0\}$, denote by
\[
L_t \defined \inf\{h>0 \mid V_{t+h}>0\}-1
\]
the number of censorings until the next exceedance; by stationarity of $\{\tilde X_t\}$ (and hence of $\{V_t\}$), the same holds true for $\{L_t\}$.
More specifically we have:
\begin{proposition}[Distribution of Interarrival Times]\label{prop:arrival}
For all $k>0$
\begin{equation}\label{Ltdist}
P(L_t=0)=\pi_1,\quad P(L_t=k)=(1-\pi_1)(1-\pi_0){\pi_0}^{k-1}
\end{equation}
where
\begin{align*}
\pi_1&\defined P(V_t>0 \mid V_{t-1}>0)=1-(1-\beta \gamma) u^\ast\\
\pi_0&\defined P(V_t=0 \mid V_{t-1}=0)=\bar \gamma u^\ast +\gamma \frac{\beta+\bar \beta^2 u^\ast\bar u^\ast}{ \beta +\bar \beta \bar u^\ast}
\end{align*}
and $\bar \gamma=1-\gamma$, $\bar\beta=1-\beta$, $\bar u^\ast = 1-u^\ast$.
In particular,
\[
\E L_t = (1-\pi_1)/(1-\pi_0),\quad \Var L_t = (1-\pi_1)(\pi_0+\pi_1)/(1-\pi_0)^2 \,.
\]
\end{proposition}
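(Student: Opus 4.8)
The plan is to pass to the uniformized coordinates of Section~\ref{SectARGPProcess} and to recognize the exceedance--indicator process as a two--state Markov chain, for which the interarrival law is classical. Set $u^\ast\defined G(u)$ with $G=G_{\xi,\sigma}$, and recall $V_t>0$ iff $\tilde X_t^\ast>u^\ast$. Define the indicator $Z_t\defined\Jc(\tilde X_t^\ast>u^\ast)$; by stationarity of $\{\tilde X_t\}$ this is a stationary $\{0,1\}$-valued process with $\P(Z_t=1)=\bar u^\ast$. First I would argue that $\{Z_t\}$ is a Markov chain and record its one--step transition probabilities as $\pi_1=\P(Z_t=1\mid Z_{t-1}=1)$ and $\pi_0=\P(Z_t=0\mid Z_{t-1}=0)$. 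Granting this, $\{L_t=0\}$ is exactly $\{Z_{t+1}=1\}$ given $Z_t=1$, of probability $\pi_1$, while for $k\ge1$ the event $\{L_t=k\}$ requires the chain to move $1\to0$, remain in $0$ for $k-1$ further steps, and then return $0\to1$, of probability $(1-\pi_1)\pi_0^{k-1}(1-\pi_0)$. This is exactly \eqref{Ltdist}; summing the geometric series gives $\sum_{k\ge0}\P(L_t=k)=1$, and the moment formulas follow from $\sum_{k\ge1}k\,\pi_0^{k-1}=(1-\pi_0)^{-2}$ and $\sum_{k\ge1}k^2\pi_0^{k-1}=(1+\pi_0)(1-\pi_0)^{-3}$, which yield $\E L_t=(1-\pi_1)/(1-\pi_0)$ and, after subtracting $(\E L_t)^2$, the stated variance.

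The substance is therefore the evaluation of $\pi_1$ and $\pi_0$, which I would carry out by conditioning on the switching variable $W_t$. In the transformed recursion $\tilde X_t^\ast=(1-W_t)\tilde\epsilon_t^\ast+W_t X_t^\ast$, on $\{W_t=0\}$ (probability $\bar\gamma$) the value $\tilde X_t^\ast=\tilde\epsilon_t^\ast$ is a fresh uniform, contributing $\bar\gamma\bar u^\ast$ to $\pi_1$ and $\bar\gamma u^\ast$ to $\pi_0$ irrespective of the past. On $\{W_t=1\}$ (probability $\gamma$) one needs the conditional probabilities $\P(X_t^\ast>u^\ast\mid Z_{t-1})$. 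Here I would use the explicit ARGP transition kernel: writing $c\defined(f^\ast)^{-1}(u^\ast)=\beta u^\ast/(1-\bar\beta u^\ast)$ (from \eqref{Equafstarinverse}, and noting $c\le u^\ast$), the min--operator in \eqref{eq:unifX} gives
\[
\P(X_t^\ast>u^\ast\mid X_{t-1}^\ast=x)=\Jc(x>c)\,(1-\bar\beta u^\ast),\qquad 1-\bar\beta u^\ast=\beta+\bar\beta\bar u^\ast,
\]
since $f^\ast(x)>u^\ast$ is preserved with probability $\beta$ (when $U_t=1$) and, when $U_t=0$, survives the minimum with the fresh $\epsilon_t^\ast$ with probability $\bar u^\ast$. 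It then remains to integrate this kernel against the conditional law of $X_{t-1}^\ast$ given $Z_{t-1}$: conditioning on an exceedance places $X_{t-1}^\ast$ uniformly on $(u^\ast,1]$, which lies entirely above $c$ and yields $\pi_1=\bar\gamma\bar u^\ast+\gamma(1-\bar\beta u^\ast)=1-(1-\beta\gamma)u^\ast$; conditioning on a censoring places $X_{t-1}^\ast$ uniformly on $[0,u^\ast]$, so that splitting the integral at $c$ gives $\P(X_t^\ast\le u^\ast\mid Z_{t-1}=0)=c/u^\ast+\bar\beta(u^\ast-c)=(\beta+\bar\beta^2u^\ast\bar u^\ast)/(\beta+\bar\beta\bar u^\ast)$ and hence the stated $\pi_0$.

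I expect the main obstacle to be the Markov step, i.e.\ justifying that conditioning on $Z_{t-1}$ (which observes the hidden chain only through the $W$--switch) permits replacing $X_{t-1}^\ast$ by a uniform variable on the appropriate subinterval $[0,u^\ast]$ or $(u^\ast,1]$, so that $\{Z_t\}$ is genuinely a two--state Markov chain with the transition probabilities computed above; at $\gamma=1$ this is immediate because $\tilde X_t^\ast=X_t^\ast$, and the general case is where the care is required. Once this is in place, the remaining ingredients---evaluating the kernel through the $\min$, the two elementary integrals against the conditional uniform laws, and the geometric summations for the distribution, mean and variance---are routine.
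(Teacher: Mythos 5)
Your route is the same as the paper's: pass to uniformized coordinates, treat the exceedance indicator (your $Z_t$, the paper's $J_t$) as a two-state Markov chain, read off \eqref{Ltdist} and the moments from the resulting Dirac-plus-geometric mixture, and compute the transition probabilities by conditioning on $(W_t,U_t)$. Your values agree with the paper's (indeed $c/u^\ast+\bar\beta(u^\ast-c)=(\beta+\bar\beta^2u^\ast\bar u^\ast)/(\beta+\bar\beta\bar u^\ast)$ for $c=\beta u^\ast/(\beta+\bar\beta\bar u^\ast)$), and your kernel-plus-integration argument is just a rephrasing of the paper's disjoint-union decomposition; the geometric summations for mean and variance are fine.

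The genuine gap is the step you single out and then dispose of with ``at $\gamma=1$ this is immediate because $\tilde X_t^\ast=X_t^\ast$.'' Your localization of the difficulty is inverted: the problem is not the hidden $W$-switch, and it is not immediate even for $\gamma=1$. A function of a Markov chain is in general not Markov, and strong lumpability fails here: by your own kernel formula, the transition probability out of the censored region depends on the state through $\Jc(x>c)$ with $c=(f^\ast)^{-1}(u^\ast)<u^\ast$, which is not constant on $[0,u^\ast]$, so a priori $P(Z_{t+1}=1\mid Z_t=0,Z_{t-1},\dots)$ depends on the earlier indicators; replacing the state by a ${\rm unif}[0,u^\ast]$ variable is exactly what must be justified along an \emph{entire} run of censorings, not only at its first step. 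Three facts make the factorization $(1-\pi_1)\pi_0^{k-1}(1-\pi_0)$ exactly correct: (i) exits from $(u^\ast,1]$ have constant probability $\pi_1$, since $(u^\ast,1]\subset(c,1]$; (ii) entering $[0,u^\ast]$ from an exceedance produces exactly the law ${\rm unif}[0,u^\ast]$, because the only entry channels are a fresh $\tilde\epsilon_t^\ast\le u^\ast$ or, through the min, a fresh $\epsilon_t^\ast\le u^\ast$; and (iii) --- the piece your proposal is missing --- ${\rm unif}[0,u^\ast]$ is an exact quasi-stationary law of the kernel killed at exceedance: with $K_0(y,A)\defined P\bigl(\tilde X_t^\ast\in A,\ \tilde X_t^\ast\le u^\ast\mid\tilde X_{t-1}^\ast=y\bigr)$ one has $\int_0^{u^\ast}K_0(y,\cdot)\,\frac{dy}{u^\ast}=\pi_0\cdot{\rm unif}[0,u^\ast]$. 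For $\gamma=1$ this is a short computation: pushing forward the atom at $f^\ast(y)$ (using $((f^\ast)^{-1})'(z)=\beta/(1-\bar\beta z)^2$) and adding the two rectangular parts, the density collapses to the constant $\beta+\bar\beta u^\ast$, which is $\pi_0$ at $\gamma=1$; the case $\gamma<1$ then follows since the $W_t=0$ branch merely adds $\bar\gamma u^\ast\cdot{\rm unif}[0,u^\ast]$ to the killed kernel. With (i)--(iii) the conditional law stays uniform along every run of zeros, each continuation has probability exactly $\pi_0$, and \eqref{Ltdist} holds for all $k$; without (iii), your formula is unjustified for $k\ge2$. In fairness, the paper's own proof asserts the Markov structure of $J_t$ in a single sentence and shares this lacuna; but since your write-up explicitly rests on that step, (iii) is what you would need to supply.
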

Thus $L_t$ is distributed according to a mixture of a Dirac measure at $0$ (with mixture probability $\pi_1$) and a geometric distribution with parameter $1-\pi_0$. In particular, on the set $\{L_t>0\}$, the distribution is memoryless.
\begin{proof}
The Markovian structure of $\tilde X_t$ translates into a Markovian structure for $J_t=\Jc(V_t>0)$.
More specifically, $\{J_t\}$ is a two-state Markov chain with states $0$ and $1$ and transition matrix
\[
\Pi=\left(\begin{array}{cc}
\pi_0 & 1-\pi_0\\
1-\pi_1 & \pi_1
\end{array}\right)
\]
where $\pi_0=P(V_t=0 \mid V_{t-1}=0)$ and $\pi_1=P(V_t>0 \mid V_{t-1}>0)$.
This shows \eqref{Ltdist}.
For the expressions for $\pi_0$ and $\pi_1$, we note that
\begin{align*}
& \{V_t>0,\, V_{t-1}>0\} = \{W_t=0,\, \tilde\epsilon_t^\ast>u^\ast,\, \tilde X_{t-1}^\ast>u^\ast\}\\
&\quad \dot\cup\, \{W_t=1,\, U_t=1,\, f^\ast(\tilde X_{t-1}^\ast)>u^\ast,\, \tilde X_{t-1}^\ast>u^\ast\}\\
&\quad \dot\cup\, \{W_t=1,\, U_t=0,\, \min\{\epsilon^\ast_t,f^\ast(\tilde X_{t-1}^\ast)\}>u^\ast,\, \tilde X_{t-1}^\ast>u^\ast\}\\
=\ &\{W_t=0,\, \tilde\epsilon_t^\ast>u^\ast,\, \tilde X_{t-1}^\ast>u^\ast\}\\
&\quad \dot\cup\, \{W_t=1,\, U_t=1,\, \tilde X_{t-1}^\ast>u^\ast\}\\
&\quad \dot\cup\, \{W_t=1,\, U_t=0,\, \epsilon^\ast_t>u^\ast,\, \tilde X_{t-1}^\ast>u^\ast\}
\end{align*}
and hence
\[
\pi_1=\bar \gamma \bar u^\ast +\gamma \beta + \gamma \bar \beta \bar u^\ast=1-(1-\beta \gamma) u^\ast\,.
\]
As for $\pi_0$, we obtain using Proposition~\ref{increaseprop}
\begin{align*}
& \{V_t=0,\, V_{t-1}=0\} = \{W_t=0,\, \tilde\epsilon_t^\ast \leq u^\ast,\, \tilde X_{t-1}^\ast\leq u^\ast\}\\
&\quad \dot\cup\, \{W_t=1,\, U_t=1,\, f^\ast(\tilde X_{t-1}^\ast) \leq u^\ast,\, \tilde X_{t-1}^\ast\leq u^\ast\}\\
&\quad \dot\cup\, \{W_t=1,\, U_t=0,\, \tilde X_{t-1}^\ast\leq u^\ast\}\setminus \{\epsilon^\ast_t> u^\ast,\, f^\ast(\tilde X_{t-1}^\ast)>u^\ast\}\\
=\ & \{W_t=0,\, \tilde\epsilon_t^\ast \leq u^\ast,\, \tilde X_{t-1}^\ast \leq u^\ast\}\\
&\quad \dot\cup\, \{W_t=1,\, U_t=1,\, \tilde X_{t-1}^\ast\leq (f^\ast)^{-1}(u^\ast)\}\\
&\quad \dot\cup\, \{W_t=1,\, U_t=0,\, \tilde X_{t-1}^\ast\leq u^\ast \}\setminus \{\epsilon^\ast_t> u^\ast,\, \tilde X_{t-1}^\ast>(f^\ast)^{-1}(u^\ast)\}
\end{align*}
and hence, recalling from \eqref{Equafstarinverse} that $(f^\ast)^{-1}(u^\ast) = (\beta u^\ast)/(1-(1-\beta)u^\ast) = (\beta u^\ast)/(\beta+\bar\beta\bar u^\ast)$, after some calculations we get
\begin{align*}
\pi_0 &= \bar \gamma u^\ast +\gamma \beta \frac{(f^\ast)^{-1}(u^\ast)}{u^\ast} + \gamma\bar\beta \frac{u^\ast-(u^\ast-(f^\ast)^{-1}(u^\ast))\bar u^\ast }{u^\ast}\\
&= \bar \gamma u^\ast + \gamma \frac{\beta + \bar\beta^2 u^\ast \bar u^\ast}{\beta+\bar\beta\bar u^\ast}\,. \qedhere
\end{align*}
\end{proof}

\newpage

\section{Parameter Estimation} \label{ParamEst}

This section provides statistical estimators for the underlying parameters of the different variants of the ARGP process.
To this end, we first address smoothness of the model and subsequently provide
parameter estimators for the original ARGP process and its modifications.

\subsection{Parameter Estimation for the ARGP Process}

As indicated above, for inference we assume that the shape parameter $\xi$ of the marginals is larger than $-1/2$. By the ergodic theorem (see for instance Theorem~6.21 in \cite{Breiman92}), the smoothness of the GPD cdf in $\sigma$ and $\xi$, and the $\alpha$-mixing property we obtain
\begin{theorem}\label{MLETHM}
For $\xi> -1/2$ the following statements hold:
\begin{enumerate}
\item[{(a)}] The ARGP model is $L_2$-differentiable.
\item[{(b)}] The Maximum Likelihood estimators (MLEs), evaluated on a single path $\{X_t(\omega)\}$ as if the values $X_t$ were realizations of an i.i.d.\ sequence, are strongly consistent and asymptotically normal.
\item[{(c)}] $\beta$ can be estimated by $\hat\beta_n\defined 1-2\hat p_{n} $ where $\hat p_{n}$ is the empirical counterpart of the probability $p\defined P(X_t<X_{t-1})$. $\hat \beta_n$ is again strongly consistent, and, together with the MLE for $\sigma,\xi$, jointly asymptotically normal.
\end{enumerate}
\end{theorem}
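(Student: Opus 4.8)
The plan is to establish the three claims using exactly the ingredients flagged in the statement: $L_2$-differentiability of the GPD marginal family, the ergodic theorem, and the geometric $\alpha$-mixing from Proposition~\ref{alphamix}. The guiding observation is that the ``working independence'' log-likelihood and the statistic behind $\hat\beta_n$ depend on the path only through empirical averages of fixed functions of the one- and two-dimensional marginals, which are stationary and ergodic.

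For part~(a), I would reduce $L_2$-differentiability of the process model to Hellinger-differentiability of the i.i.d.\ family $\{G_{\xi,\sigma}\}$, since the likelihood used for estimation factorises into marginal densities. Concretely I would (i) write the formal score $\dot\ell_\theta=(\partial_\xi\log g_{\xi,\sigma},\partial_\sigma\log g_{\xi,\sigma})$; (ii) verify $\dot\ell_{\theta}\in L_2(G_{\xi,\sigma})$, i.e.\ finiteness of the Fisher information; and (iii) check that the remainder in the Hellinger expansion is $o(|h|^2)$. The main obstacle is step~(ii) in the case $\xi<0$: the support $[0,-\sigma/\xi]$ then \emph{depends on the parameter} and the density behaves like $(\text{endpoint}-x)^{-1/\xi-1}$ near the moving right endpoint, so the naive differentiation-under-the-integral argument fails. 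Finiteness of the information integral is precisely what forces the standing assumption $\xi>-1/2$, and this boundary analysis is where the genuine work lies; I would invoke the known Hellinger-differentiability of the GPD shape--scale family on $\{\xi>-1/2\}$.

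For part~(b), strong consistency follows from the ergodic theorem: $\tfrac1n\sum_t\log g_\theta(X_t)\to\E\log g_\theta(X_1)$ almost surely, the limit is uniquely maximised at the true $\theta_0$ by the information inequality (the marginal is correctly specified), and upgrading to locally uniform convergence yields $\hat\theta_n\to\theta_0$. For asymptotic normality I would expand the score equation about $\theta_0$: the averaged score $\tfrac1{\sqrt n}\sum_t\dot\ell_{\theta_0}(X_t)$ is a sum of centred, stationary, finite-variance terms (finiteness from part~(a)), so the geometric mixing rate delivers a central limit theorem via a standard $\alpha$-mixing CLT, while the Hessian converges by ergodicity to the Fisher information $I$. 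Because the likelihood is misspecified as independent, the limit law is the sandwich $N(0,\,I^{-1}\Sigma I^{-1})$ with $\Sigma=\sum_{h\in\mathbb{Z}}\Cov(\dot\ell_{\theta_0}(X_0),\dot\ell_{\theta_0}(X_h))$, rather than $I^{-1}$; summability of this series is guaranteed by the geometric decay of $\alpha_h$.

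For part~(c), I first identify $p$. Since $f^\ast(x)>x$ on $(0,1)$ by Proposition~\ref{increaseprop}, a decrease $\{X_t<X_{t-1}\}$ occurs if and only if $U_t=0$ and $\epsilon_t^\ast<X_{t-1}^\ast$ (the latter already forces $\epsilon_t^\ast<f^\ast(X_{t-1}^\ast)$); as $U_t$ is independent of the uniform pair $(\epsilon_t^\ast,X_{t-1}^\ast)$, this gives $p=(1-\beta)\cdot\tfrac12$ and hence $\beta=1-2p$. Strong consistency of $\hat\beta_n=1-2\hat p_n$ then follows from the ergodic theorem applied to the bounded functional $\Jc(X_t<X_{t-1})$ of the stationary bivariate process. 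Finally, both $\hat\beta_n$ and the MLE are asymptotically linear in sample averages of fixed functions of the process---namely $\Jc(X_t<X_{t-1})-p$ and $\dot\ell_{\theta_0}(X_t)$---so a single multivariate $\alpha$-mixing CLT for the stacked vector, combined with the delta method (linear for $\hat\beta_n$, via the score expansion for the MLE), yields joint asymptotic normality of $(\hat\sigma_n,\hat\xi_n,\hat\beta_n)$ with an explicit sandwich covariance.
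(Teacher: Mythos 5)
Your proposal is correct and follows essentially the same route as the paper's proof: reduction of $L_2$-differentiability to the i.i.d.\ GPD marginal family with the parameter-dependent support for $\xi\in(-1/2,0)$ handled separately (the paper does this via the symmetric difference of supports, or by citing B\"ucher--Segers, which is what your appeal to ``known Hellinger-differentiability'' amounts to), the ergodic theorem for strong consistency, the identity $p=(1-\beta)/2$ obtained by the same symmetry argument, and an $\alpha$-mixing CLT with the Cram\'er--Wold device for joint asymptotic normality. Your explicit observation that the working-independence likelihood is misspecified, so the limiting covariance is the sandwich $I^{-1}\Sigma I^{-1}$ with $\Sigma$ summing the autocovariances of the score rather than $I^{-1}$, is a worthwhile refinement that the paper leaves implicit, but it does not alter the method.
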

\begin{remark}
For $\xi\leq -1/2$ the ARGP model fails to be $L_2$-differentiable, and
observations scattered around the right endpoint of the distribution $-\sigma/\xi$
become overly informative.
Moreover, for $\xi < -1/2$, similarly to the situation when one wants to estimate the endpoint $\theta$ in ${\rm unif}[0,\theta]$, it pays off to only use the maximal observation $M_n=\max_{1 \leq t\leq n} X_t$.
In particular, in this case MLE is generally no longer asymptotically normal and will be consistent at a higher rate than the usual $1/\sqrt{n}$, where $n$ is the sample size.
For more details on these situations, see, e.g., Theorem~3 in \cite{Smith85}. \remaend
\end{remark}
\begin{proof}
\begin{enumerate}
\item[(a)] For $L_2$-differentiability of the ARGP model, let us first deal with the i.i.d.\ situation.
In this case, there are two distinct situations.
For $\xi\ge 0$, the support $D(\xi,\sigma)$ of the distributions does not depend on the parameter, so we may argue as in \citet{Haj:72} (only treating a one dimensional parameter) and \citet[Satz~1.194]{Witting85}.
For $\xi\in(-1/2,0)$, one has to treat the two subdomains $D(\xi,\sigma)\,{\scriptstyle \Delta}\,D(\xi',\sigma)$ and $D(\xi,\sigma)\,\cap\,D(\xi',\sigma)$ separately.
For the latter we may again argue as for $\xi\ge0$, for the former one shows that the contribution to the squared integrated remainder is of order
$\mathop{\rm o}(|\xi-\xi'|^2)$.
Transferring the integration to $[0,1]$ by the quantile transformation, this easily follows.
An alternative proof can be given along the lines of Proposition~3.2 in \cite{BuecherSegers16}.
The translation from the i.i.d.\ case into our $X_t$-setting follows from ergodicity and (asymptotic) stationarity, see Proposition~\ref{alphamix}.
\item[(b)] Part (a) gives the regularity conditions under which (suitable versions of) the MLEs are well defined and strongly consistent.
Asymptotic normality follows from the $\alpha$-mixing property, see Proposition~\ref{alphamix}, by a ramification of the central limit theorem for stationary, $\alpha$-mixing variables; see, e.g., \citet[Theorem~18.5.3]{IbragimovLinnik}.
\item[(c)] By the ergodic theorem, $\hat p_{n}$ converges almost surely to $p$, implying strong consistency.
Now
\[
\{X_t<X_{t-1}\}=\{X_t^\ast<X_{t-1}^\ast\}=\{U_t=0,\, \epsilon^\ast_t<X_{t-1}^\ast\}
\]
so by symmetry of the i.i.d.\ variables $\epsilon^\ast_t$ and $X_{t-1}^\ast$ we obtain
\[
p=P(X_t<X_{t-1}) = (1-\beta)/2.
\]
Hence $\beta=1-2p$, so $\hat\beta$ is strongly consistent for $\beta$.
As in (b), $\alpha$-mixing (together with the Cram\'er-Wold-device) implies joint asymptotic normality. \qedhere
\end{enumerate}
\end{proof}

\subsection{Parameter Estimation for the MARGP Process}

In the MARGP model, the estimation of $\gamma$ and $\beta$ becomes somewhat more involved.
We start with the simplifying assumption that $f$ is known (which in fact assumes
knowledge of $\xi, \sigma, \beta$), so we only require one additional estimation equation to estimate $\gamma$.

\begin{proposition}\label{prop34}
Let $\tilde p_{n}$, $\tilde q_n$ be the empirical counterparts of the probabilities
\[
\tilde p \defined P(\tilde X_t<\tilde X_{t-1}),\quad \tilde q\defined P(\tilde X_t>\tilde X_{t-1},\, \tilde X_t\not=f(\tilde X_{t-1})).
\]
Then $\tilde\gamma_n\defined 1-2\tilde q_n$ and $\tilde\beta_n\defined (1-2\tilde p_{n})/(1-2\tilde q_{n})$ are strongly consistent, and, together with the MLE for $\sigma,\xi$, jointly asymptotically normal.
\end{proposition}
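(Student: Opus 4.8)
The plan is to recognize $\tilde\gamma_n$ and $\tilde\beta_n$ as method-of-moments estimators built from the two population quantities $\tilde p$ and $\tilde q$, and then to transport the consistency and asymptotic normality already available for empirical frequencies of a stationary $\alpha$-mixing sequence through a smooth parametrization map. The decisive first step is therefore to evaluate $\tilde p$ and $\tilde q$ in closed form in terms of $\beta$ and $\gamma$, and to verify that the resulting system inverts to recover $(\beta,\gamma)$.

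For this computation I would condition on the switching variable $W_t$, which is independent of $\tilde X_{t-1}$ and of the innovations at time $t$. On $\{W_t=0\}$ the value $\tilde X_t=\tilde\epsilon_t$ is a fresh draw from $G$, independent of $\tilde X_{t-1}\sim G$; by symmetry $\P(\tilde X_t<\tilde X_{t-1}\mid W_t=0)=\tfrac12$, while $\P(\tilde X_t=f(\tilde X_{t-1})\mid W_t=0)=0$ by continuity of $G$, so this case contributes $\tfrac12$ to the non-deterministic increase as well. On $\{W_t=1\}$ the update reduces to an ordinary ARGP step applied to $\tilde X_{t-1}$, for which Theorem~\ref{MLETHM}(c) gives $\P(\tilde X_t<\tilde X_{t-1}\mid W_t=1)=(1-\beta)/2$, whereas equation~\eqref{determIncrease} (Proposition~\ref{increaseprop}) shows that every increase is deterministic, so the non-deterministic increase contributes nothing here. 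Assembling the two cases yields
\[
\tilde p = \tfrac{1-\gamma}{2} + \gamma\,\tfrac{1-\beta}{2} = \tfrac{1-\beta\gamma}{2},
\qquad
\tilde q = \tfrac{1-\gamma}{2},
\]
which inverts to $\gamma=1-2\tilde q$ and $\beta=(1-2\tilde p)/(1-2\tilde q)$. This is precisely the functional form of the estimators, so $\tilde\gamma_n$ and $\tilde\beta_n$ target $\gamma$ and $\beta$.

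Next I would establish strong consistency. As in Proposition~\ref{alphamix}, the MARGP process inherits stationarity, ergodicity, and the $\alpha$-mixing property from the ARGP process, since it arises from an additional layer of i.i.d.\ randomization. The empirical frequencies $\tilde p_n$ and $\tilde q_n$ are time averages of bounded functionals of the consecutive pairs $(\tilde X_{t-1},\tilde X_t)$, so the ergodic theorem gives $\tilde p_n\to\tilde p$ and $\tilde q_n\to\tilde q$ almost surely. Because $\gamma>0$ we have $1-2\tilde q=\gamma\neq0$, so the inversion map $(\tilde p,\tilde q)\mapsto(\gamma,\beta)$ is continuous at the true value, and the continuous mapping theorem delivers strong consistency of $\tilde\gamma_n$ and $\tilde\beta_n$.

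Finally, for joint asymptotic normality I would stack $(\tilde p_n,\tilde q_n)$ together with the asymptotically linear MLE of $(\sigma,\xi)$; all components are additive functionals over finitely many lags of the same stationary $\alpha$-mixing process, so the central limit theorem for $\alpha$-mixing sequences (e.g.\ \citet[Theorem~18.5.3]{IbragimovLinnik}), combined with the Cram\'er--Wold device, yields joint asymptotic normality of the stacked vector. Applying the delta method through the smooth parametrization map (smooth because $1-2\tilde q=\gamma\neq0$) then transfers this to $(\hat\sigma,\hat\xi,\tilde\gamma_n,\tilde\beta_n)$. The main obstacle is the explicit evaluation of $\tilde q$: one must correctly isolate the non-deterministic increases and invoke~\eqref{determIncrease} to discard the ARGP-step contribution, leaving only the fresh-draw term $(1-\gamma)/2$; the remaining consistency and normality arguments are structurally identical to those underlying Theorem~\ref{MLETHM}.
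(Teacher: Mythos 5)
Your proof is correct and follows essentially the same route as the paper: the same decomposition over $W_t$ (and $U_t$) combined with the symmetry of i.i.d.\ $G$-marginals to get $\tilde p=(1-\beta\gamma)/2$, the same use of Proposition~\ref{increaseprop}/\eqref{determIncrease} to reduce $\tilde q$ to the fresh-draw branch giving $(1-\gamma)/2$, and the same appeal to the ergodic theorem plus the $\alpha$-mixing CLT with Cram\'er--Wold for consistency and joint asymptotic normality. Your additional explicit steps (the continuous mapping theorem, the delta method, and the observation that $\gamma\neq 0$ is needed for the inversion to be well defined) are details the paper leaves implicit, not a different argument.
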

\begin{proof}
Note that
\begin{align*}
\{\tilde X_t<\tilde X_{t-1}\} &= \{\tilde X_t^\ast<\tilde X_{t-1}^\ast\}\\
&= \{W_t=0,\,\tilde\epsilon_t^\ast <\tilde X_{t-1}^\ast\}\, \dot\cup\,
\{W_t=1,\, U_t=0,\, \epsilon_t^\ast <\tilde X_{t-1}^\ast\}
\end{align*}
so $\tilde p=(1-\gamma)/2+ \gamma (1-\beta)/2=1/2-\beta\gamma/2$.
For the second event, note that $\{\tilde X_t>\tilde X_{t-1},\,\tilde X_t \not=f(\tilde X_{t-1})\}$ may be equivalently written as
\[
\{W_t=0,\,\tilde\epsilon_t>\tilde X_{t-1},\,\tilde\epsilon_t\not=f(\tilde X_{t-1})\}
\]
and hence by symmetry carries probability $\tilde q=(1-\gamma)/2$.
Thus, if $\tilde p$, $\tilde q$ are the empirical frequencies $\tilde p_{n}$, $\tilde q_{n}$ of the respective events, the estimators $\tilde\gamma_n$ and $\tilde\beta_n$ are strongly consistent by the ergodic theorem.
Joint asymptotic normality follows as in Theorem~\ref{MLETHM}.
\end{proof}

While $\sigma$ and $\xi$ can indeed be estimated separately, to be able to compute
$\tilde q_n$, we need a preliminary estimator for $\beta$.
To make this dependence
explicit, in this paragraph, we write $f_\beta$ for $f$.
The curve $\{(x,f_\beta(x))\}_{x\in[0,1]}\subset [0,1]^2$ is a two-dimensional Lebesgue null set; hence, if for some $\beta$, $\{(x,f_\beta(x))\}$ carries mass, it must be the true $\beta$.
This can be used for a preliminary estimator for $\beta$ as follows:
To some given small $\epsilon>0$ and some grid $\{\beta_i,\;1\leq i \leq M\}\subset[0,1]$, we compute $N_i \defined \sum_{j=2}^n\Jc(|\tilde X_{j}-f_{\beta_i}(\tilde X_{j-1})|<\epsilon)$ and use $\beta_n^{(0)}\defined\beta_{i_0}$ as an initial estimator where $N_{i_0}=\max_{1\leq i\leq M}N_i$.
With this $\beta_n^{(0)}$, and the separately computed MLE for $\sigma,\xi$, we compute $\tilde q_n$; in a second step, we determine the relevant estimators for $\beta$ and $\gamma$ using Proposition~\ref{prop34}.

\subsection{Parameter Estimation for the TARGP Process}

As in the previous subsections, we simplify the following discussion without loss by passing to $V^\ast=G(V)$.
We denote the censoring probability by
\[
u^\ast  \defined G(u)\,.
\]
The following result is a straightforward generalization of Proposition~\ref{prop34} to the TARGP process:
\begin{proposition}
Let $\tilde p^{u}_{n}$, $\tilde q^{u}_n$ be the empirical counterparts of the probabilities
\[
\tilde p^{u}\defined P(V_t<V_{t-1}),\quad \tilde q^{u}\defined P(V_t>V_{t-1},\,V_t\not=f(V_{t-1}))
\]
and set $\bar u^\ast_2 \defined 1- (u^\ast)^2$.
Then the estimators
\begin{align*}
\tilde\gamma^{u}_n &\defined 1-2\tilde q^{u}_n/\bar u^\ast_2\\
\tilde\beta^{u}_n &\defined (\bar u^\ast_2-2\tilde p^{u}_{n})/(\bar u^\ast_2-2\tilde q^{u}_{n})
\end{align*}
are strongly consistent, and, together with the MLE for $\sigma,\xi$, jointly asymptotically normal.
\end{proposition}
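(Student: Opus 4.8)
The plan is to follow the template of Proposition~\ref{prop34}: express the two probabilities $\tilde p^{u}$ and $\tilde q^{u}$ in closed form as functions of $\beta$, $\gamma$ and the censoring level $u^\ast$, invert these relations to recover $\beta$ and $\gamma$, and then transfer strong consistency and joint asymptotic normality from the ergodic theorem and the $\alpha$-mixing central limit theorem exactly as in Theorem~\ref{MLETHM}. Throughout I write $\bar\gamma=1-\gamma$ and $\bar\beta=1-\beta$. The only genuinely new ingredient compared with Proposition~\ref{prop34} is the bookkeeping forced by the truncation at $u$, which replaces the factor $\tfrac12$ there by $\bar u^\ast_2/2 = (1-(u^\ast)^2)/2$.

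First I would compute $\tilde p^{u}$. Since $V_t=(\tilde X_t-u)_+\ge 0$, the event $\{V_t<V_{t-1}\}$ forces $V_{t-1}>0$, i.e.\ $\tilde X_{t-1}^\ast>u^\ast$, and on this event $V_t<V_{t-1}$ is equivalent to $\tilde X_t^\ast<\tilde X_{t-1}^\ast$ (whether or not $V_t$ has been censored to $0$). Hence $\{V_t<V_{t-1}\}=\{\tilde X_t^\ast<\tilde X_{t-1}^\ast,\ \tilde X_{t-1}^\ast>u^\ast\}$, and I would decompose $\{\tilde X_t^\ast<\tilde X_{t-1}^\ast\}$ into the two innovation scenarios $\{W_t=0,\ \tilde\epsilon_t^\ast<\tilde X_{t-1}^\ast\}$ and $\{W_t=1,\,U_t=0,\ \epsilon_t^\ast<\tilde X_{t-1}^\ast\}$ exactly as in Proposition~\ref{prop34}. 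Using independence and uniform marginals, the first scenario has probability $\bar\gamma\,P(\tilde\epsilon_t^\ast<\tilde X_{t-1}^\ast,\ \tilde X_{t-1}^\ast>u^\ast)$ and the second $\gamma\bar\beta\,P(\epsilon_t^\ast<\tilde X_{t-1}^\ast,\ \tilde X_{t-1}^\ast>u^\ast)$, where in each case the position probability equals $\int_{u^\ast}^1 y\,dy=\bar u^\ast_2/2$. This yields
\[
\tilde p^{u}=\tfrac{\bar u^\ast_2}{2}\,(\bar\gamma+\gamma\bar\beta)=\tfrac{\bar u^\ast_2}{2}\,(1-\beta\gamma),
\]
which reduces to the expression in Proposition~\ref{prop34} when $u^\ast=0$.

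The computation of $\tilde q^{u}$ is the step I expect to be the main obstacle, since the truncation interacts with the distinction between deterministic and stochastic increases. By Proposition~\ref{increaseprop}, on $\{W_t=1\}$ the process can only increase deterministically along $f$, so after removing the deterministic-increase curve the event $\{V_t>V_{t-1},\ V_t\neq f(V_{t-1})\}$ retains only the scenario $\{W_t=0\}$, where $\tilde X_t=\tilde\epsilon_t$ is continuous and independent of the past and therefore a.s.\ avoids that curve. On $\{W_t=0\}$ an increase $V_t>V_{t-1}$, allowing for $V_{t-1}=0$ and for censoring of $V_t$, is equivalent to $\tilde\epsilon_t^\ast>\max\{u^\ast,\tilde X_{t-1}^\ast\}$, whence
\[
\tilde q^{u}=\bar\gamma\int_0^1 P\big(\tilde\epsilon_t^\ast>\max\{u^\ast,y\}\big)\,dy=\bar\gamma\,\tfrac{\bar u^\ast_2}{2}.
\]
The care required here is twofold: recognizing that the deterministic-increase events carry positive mass (the curve carries mass, as already noted) and so must genuinely be excluded by the condition $V_t\neq f(V_{t-1})$; and correctly accounting for the atom at $V_t=0$ and the case $V_{t-1}=0$ through the $\max$ with $u^\ast$, which is exactly what produces the integral $\int_0^{u^\ast}(1-u^\ast)\,dy+\int_{u^\ast}^1(1-y)\,dy=\bar u^\ast_2/2$.

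Inverting the two displays gives $\gamma=1-2\tilde q^{u}/\bar u^\ast_2$ and $\beta=(\bar u^\ast_2-2\tilde p^{u})/(\bar u^\ast_2-2\tilde q^{u})$, which are precisely the population targets of $\tilde\gamma^{u}_n$ and $\tilde\beta^{u}_n$; note that $\bar u^\ast_2-2\tilde q^{u}=\gamma\,\bar u^\ast_2$, so the inversion maps are continuous provided $\gamma>0$ and $u^\ast<1$. Strong consistency then follows from the ergodic theorem (via Proposition~\ref{alphamix}), which gives $\tilde p^{u}_n\to\tilde p^{u}$ and $\tilde q^{u}_n\to\tilde q^{u}$ almost surely, together with the continuous mapping theorem. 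Finally, joint asymptotic normality with the MLE for $\sigma,\xi$ follows exactly as in Theorem~\ref{MLETHM} and Proposition~\ref{prop34}: the $\alpha$-mixing central limit theorem yields a joint CLT for the empirical frequencies $\tilde p^{u}_n,\tilde q^{u}_n$ and the MLE, and the delta method, through the Cram\'er--Wold device, propagates this through the smooth inversion maps.
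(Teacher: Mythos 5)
Your proposal is correct relative to the paper's framework and follows essentially the same route: the same decomposition of both events according to $(W_t,U_t)$, the same use of independence and uniform marginals to reduce everything to elementary integrals, the same appeal to Proposition~\ref{increaseprop} to discard the $W_t=1$ scenarios in the increase event, and the same transfer of strong consistency and joint asymptotic normality via the ergodic theorem and the $\alpha$-mixing CLT as in Theorem~\ref{MLETHM} and Proposition~\ref{prop34}. The one step where you genuinely deviate is the computation of $\tilde q^u$, and there your version is the more careful one: the paper writes the increase event as $\{W_t=0,\,\tilde\epsilon_t>V_{t-1},\,\tilde\epsilon_t\not=f(V_{t-1}),\,u^\ast<V_{t-1}^\ast\}$, retaining a restriction $u^\ast<V_{t-1}^\ast$ that excludes the censored state, yet assigns it the probability $(1-\gamma)\bar u^\ast_2/2$; with that restriction the probability would in fact be $(1-\gamma)(1-u^\ast)^2/2$, which would destroy the consistency of $\tilde\gamma^u_n$. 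Your formulation via $\{\tilde\epsilon^\ast_t>\max\{u^\ast,\tilde X^\ast_{t-1}\}\}$, which counts transitions out of $\{V_{t-1}=0\}$, is exactly what yields $\int_0^{u^\ast}(1-u^\ast)\,dy+\int_{u^\ast}^1(1-y)\,dy=\bar u^\ast_2/2$ and hence the population identities that the estimators invert; you also make explicit the regularity ($\gamma>0$, $u^\ast<1$) and the delta-method step that the paper leaves implicit. One caveat you inherit from the paper rather than introduce: discarding all $W_t=1$ scenarios presumes that deterministic increases of the underlying process are eliminated by the condition $V_t\not=f(V_{t-1})$, but from the censored state (where $f(V_{t-1})=f(0)=0$) a $W_t=1$ increase is not of the form $V_t=f(V_{t-1})$ and so is not excluded by that condition; since the paper's own proof has the same blind spot and states the same value for $\tilde q^u$, this is a shared issue with the paper's argument, not a gap of your proof relative to it.
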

Hence we see that, formally, using censoring amounts to replacing $1$ by $\bar u^\ast_2$ in Proposition~\ref{prop34}.
\begin{proof}
By monotonicity we have $V_t^\ast=(\tilde X_t^\ast -u^\ast)_+$ and hence
\begin{align*}
\{V_t< V_{t-1}\} = \{V_t^\ast<V_{t-1}^\ast\} &= \{W_t=0,\, \tilde\epsilon_t^\ast <V_{t-1}^\ast,\, u^\ast<V_{t-1}^\ast\}\\
&\quad \dot\cup\, \{W_t=1,\, U_t=0,\, \epsilon_t^\ast <V_{t-1}^\ast,\, u^\ast <V_{t-1}^\ast\}.
\end{align*}
Thus we obtain
\begin{align*}
&P(W_t=0,\, \tilde\epsilon_t^\ast <V_{t-1}^\ast,\, u^\ast <V_{t-1}^\ast)
= (1-\gamma) \int_{u^\ast}^1 P(\tilde\epsilon_t^\ast< v)\,dv\\
=\ & (1-\gamma) \int_{u^\ast}^1 v\,dv=(1-\gamma)(1-(u^\ast)^2)/2=(1-\gamma)\bar u^\ast_2/2\, .
\end{align*}
A similar argument applies to $\{W_t=1,\, U_t=0,\, \epsilon_t^\ast <V_{t-1}^\ast,\, u^\ast <V_{t-1}^\ast\}$, so we get
\[
\tilde p^{u}=(1-\gamma)\bar u^\ast_2/2+ \gamma (1-\beta)\bar u^\ast_2/2=(1-\beta\gamma)\bar u^\ast_2/2.
\]
For the second event, we represent $\{V_t>V_{t-1}, V_t\not=f(V_{t-1})\}$ as
\[
\{W_t=0,\,\tilde\epsilon_t>V_{t-1},\,\tilde\epsilon_t\not=f(V_{t-1}),\,u^\ast <V_{t-1}^\ast\}
\]
so it carries probability $\tilde q^{u}=(1-\gamma)\bar u^\ast_2/2$.
Hence if we determine $\tilde p^{u}$, $\tilde q^{u}$ as empirical frequencies $\tilde p^{u}_{n}$, $\tilde q^{u}_{n}$ of these events, the estimators $\tilde\gamma^{u}_n$ and $\tilde\beta^{u}_n$ are strongly consistent.
Joint asymptotic normality follows as in Theorem~\ref{MLETHM}.
\end{proof}
To be able to evaluate $f=f_{\beta,\xi,\sigma}$, as in Section~\ref{carmouflage}, we need a preliminary estimator for $\beta$, which can be computed similarly as $\beta_n^{(0)}$.

\subsection{Threshold Estimation}

Estimating the threshold $u$ from data is a delicate issue---mainly because the Pickands-Balkema-de-Haan theorem only justifies the GPD for the conditional distribution exceeding a threshold converging to infinity, so a bias can arise when the GPD is used at finite thresholds; in particular, usual $\sqrt{n}$-type asymptotics are no longer applicable.
For strategies to overcome this issue, see, e.g., \cite{CsoergoeEtAl85, Drees96, McNeil96, Embrechts, DreesKaufmann98, ChoulakianStephens01, BaudEtAl02, BeirlantEtAl2, Thompson2009,Deidda2010}, and the survey \citet{DuttaPerry06};
in the context of liquidity risk for cashflow redemptions, this is also detailed in \cite{DesmettreDeege2016}.

For this paper we assume knowledge of $u$ and, by usual location equivariance arguments, we reduce the discussion of the marginal Pareto distributions to the situation of a $0$ threshold by passing over to the exceedances $X-u$; of course, the scale $\sigma(u)$ in the censored model is transformed according to
\[
\sigma(u)\defined \sigma + \xi u\,.
\]

\section{Application to Fund Redemption Data} \label{SectResults}

The data we use to illustrate our TARGP process are from IPConcept (Luxemburg) S.A., a Luxembourg-based management company.
It includes absolute cashflows $AF^\redeem_t = SR_t \cdot RP_{t-1}$ paid to investors for shares redeemed, where $RP$ denotes the redemption price of a single share and $SR$ denotes the number of shares redeemed by the fund.
The data comprises historical data from August 22, 2000 to February 22, 2016 on a daily basis ($3888$ data points in total), for open ended mutual funds that are governed by the European UCITS and AIFM directives.

\paragraph{\textit{Simulated Realizations.}} Using the estimators $\tilde\beta^u_n$ and $\tilde\gamma^u_n$ as well as the calibrated (estimated) values of $u$, $\xi$ and $\sigma$ for a particular share class, we simulate a path $V_t^\simu$ of the TARGP process and compare it with the time series of the market data $V_t^\data$ from the introductory example of Figure~\ref{fig:Example}; the result of this is displayed in Figure~\ref{fig:SimulatedFund}, where we  have used the following estimated parameters:
\[
u =  2168\,,\quad \xi =  0.5538\,,\quad \sigma = 11488\,,\quad \tilde\beta^u_n = 0.8619 \,,\quad \tilde\gamma^u_n = 0.5778 \,.
\]
We observe that the TARGP process is able to produce redemption time series whose order of magnitude and qualitative behavior are virtually indistinguishable from market data.
In particular, TARGP processes are capable of capturing the two different regimes of fund redemptions as described by \textsf{DF3} in Section~\ref{DataFeatures}.

\begin{figure}[h]
\begin{center}
\mbox{\includegraphics[width=\textwidth]{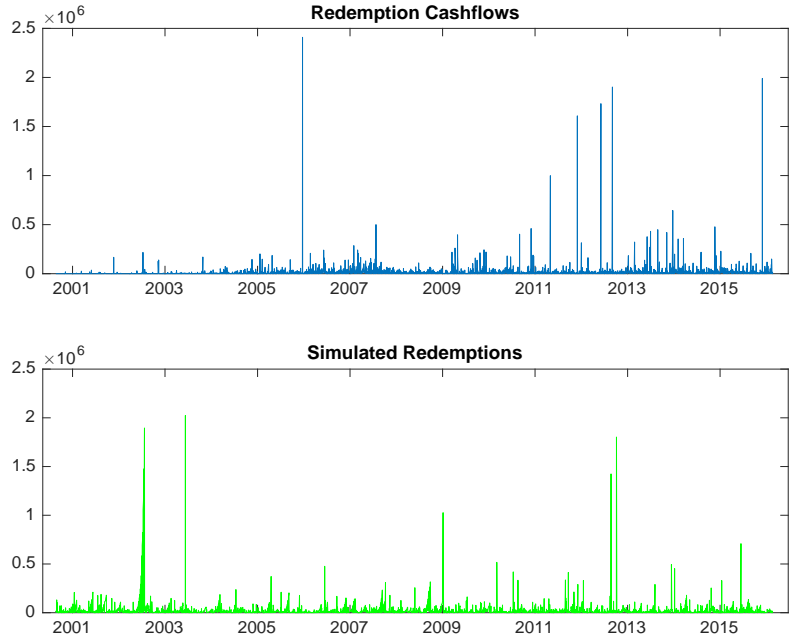}}
\end{center}
\caption{Redemption cashflow paths (upper panel) and simulated realization of the TARGP process (lower panel).}\label{fig:SimulatedFund}
\end{figure}

This is further substantiated by the box plots displayed in Figure~\ref{fig:Box}, as we observe excellent consistency over the entire range.

\begin{figure}[h]
\begin{center}
\mbox{\includegraphics[width=\textwidth,height=7cm]{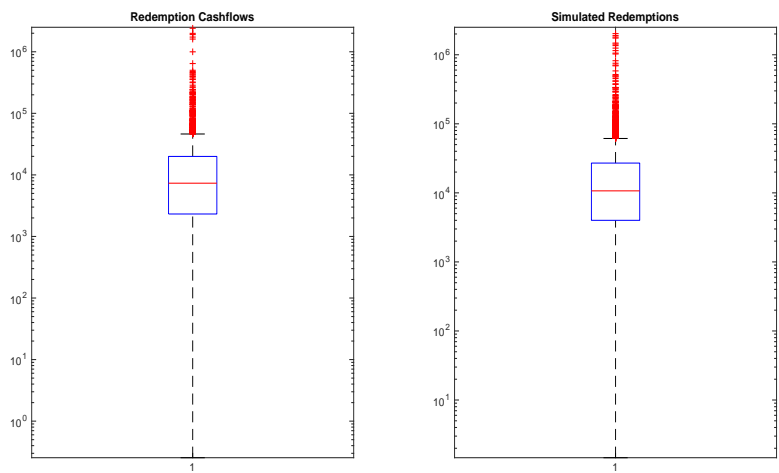}}
\end{center}
\caption{Box plot of redemption cashflows (left) and box plot of simulated TARGP process (right) on a logarithmic scale.}\label{fig:Box}
\end{figure}

\paragraph{\textit{Interarrival Times.}} To underline the qualitative insights from the visual fit, we also examine the interarrival times of the TARGP process.
For that purpose, we determine the interarrival times of the real data,
\[
L_t^\data = \inf\{h>0 \mid V^\data_{t+h}>0\}-1
\]
and compare them with the interarrival times
\[
L_t^\simu = \inf\{h>0 \mid V^\simu_{t+h}>0\}-1
\]
of a simluated realization of the TARGP process with the same length as the market data time series.

\begin{figure}[h]
\begin{center}
\mbox{\includegraphics[width=\textwidth,height=7cm]{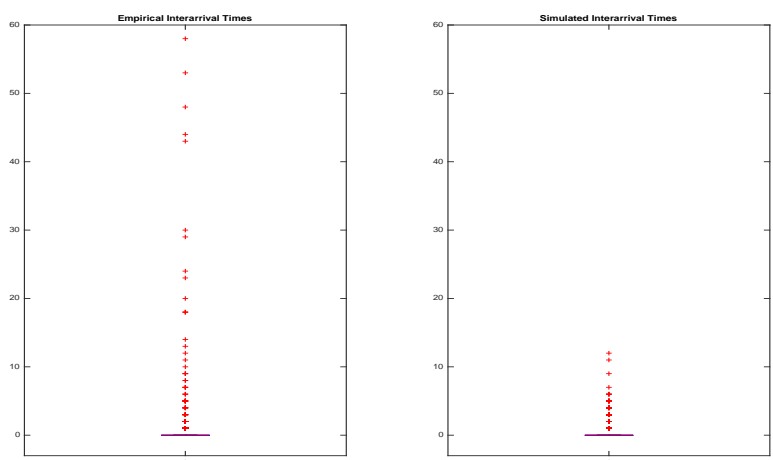}}
\end{center}
\caption{Box plots of interarrival times $L_t^\data$ (left) and $L_t^\simu$ (right).}\label{fig:wt_box}
\end{figure}

Figure~\ref{fig:wt_box} shows the results.
The fits of interarrival times are not fully convincing, at least if one aims for a good fit over the whole range of interarrival times.
The fact that our model---as do most models derived from extreme value asymptotics---generates (almost) memoryless distributions for interarrival times (see Proposition~\ref{prop:arrival}) is not easily reconciled with the interarrivals seen in the data.
An acceptable fit is achieved up to approximately the third quartile, whereas in the tails but in the tails one observes much longer interarrival times in reality than produced with our model.
From a risk modeling perspective, however, it is much more important to adequately capture the behavior of higher frequencies; in this dimension, the TARGP model performs very well.

Upon a closer inspection of the redemption time series (compare Figure~\ref{fig:Example}), we can identify a transient initial phase of the fund, where investors start buying shares in the fund and redemptions are scarce.
This phenomenon is quite well-known in professional fund management, and thus risk management typically and justifiably focuses on the post-burn-in phase.
Hence, the actual goal in the modeling of interarrival times is to fit interarrivals after the burn-in phase.
To examine this and to identify the initial phase, we also calculate the interarrival times $L_t^\data$ and $L_t^\simu$ based on time series starting after certain offsets.
Thus our calulcations are based on the series $V_t^\data$ and $V_t^\simu$ for $t\in [1,N]\,,\,[252,N]\,,\,[504,N]\,,\,[756,N]\,,\,[1008,N]\,,\,[1260,N]$, where $N$ denotes the total length of the original time series.
Here the offsets of $252,504,756,1008,1260$ trading days represent a time scale from $1$ to $5$ years.

\begin{figure}[h]
\begin{center}
\mbox{\includegraphics[width=\textwidth,height=14cm]{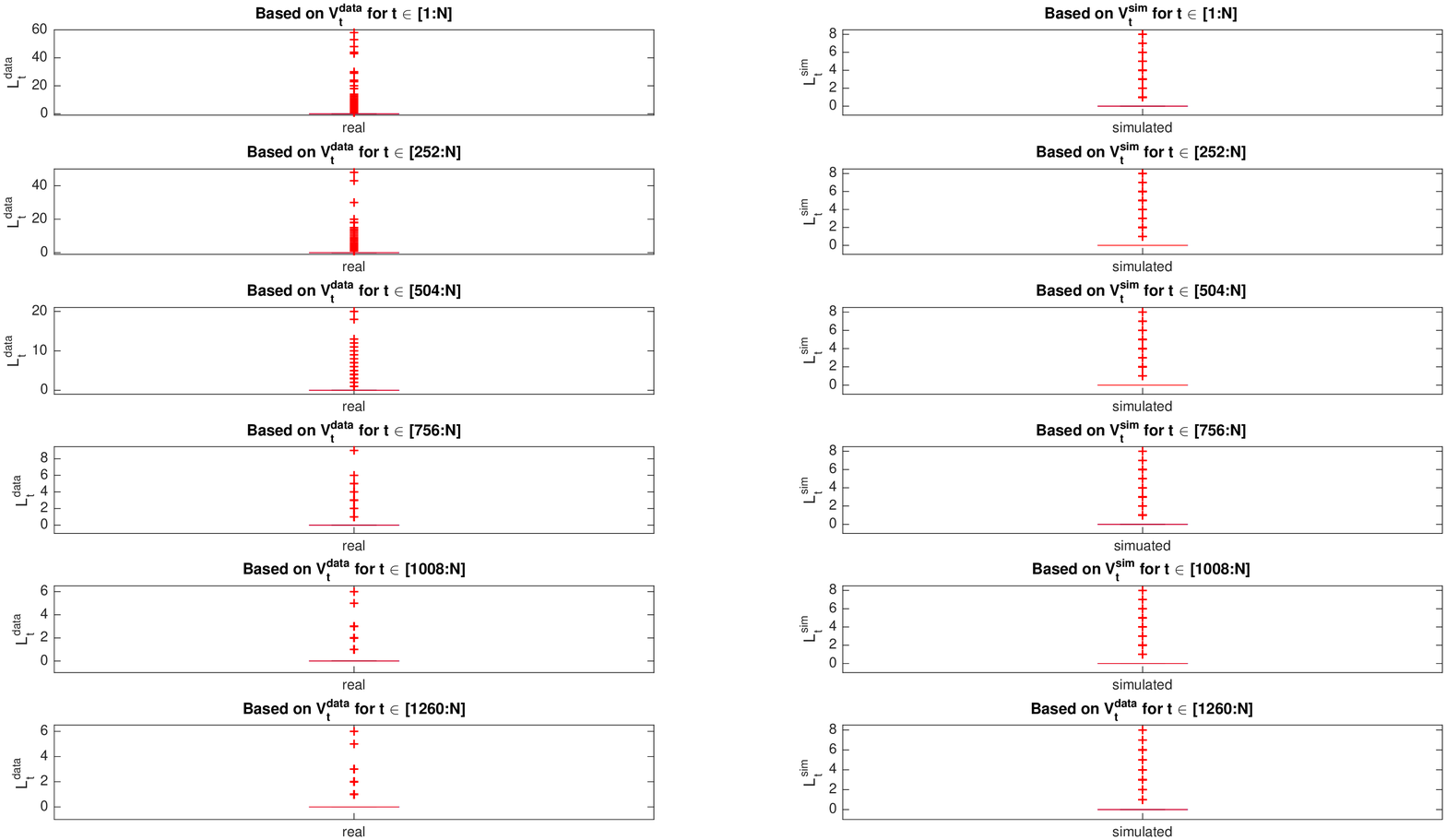}}
\end{center}
\caption{Box plots of interarrival times $L_t^\data$ (left) and $L_t^\simu$ (right) for different lengths of the underlying time series.}\label{fig:wt_compare}
\end{figure}

The results are illustrated in Figure~\ref{fig:wt_compare}.
First, they indicate that indeed the statistics stabilize beyond a certain offset ($4$ years in the displayed example).
Second, we see that the waiting times of the simulated process are quite stable over time.
When the burn-in phase is cut off, the TARGP model achieves a remarkably good fit over the entire range of interarrival times.
Thus, with the exception of funds that are still in their non-stationary initial phase, the TARGP model appears to be a suitable model for interarrival times in our application domain---at least after a burn-in phase, and there is some evidence that the high frequencies of long waiting times seen in the data in Figure\ref{fig:wt_box} could be pre-asymptotic behavior.

\section{Conclusion}\label{SectConclusion}
This article has proposed a flexible class of time series models that can be used to model dynamic phenomena that feature extreme-value distributions.
We have also provided estimators for the underlying parameters, and we have illustrated our model in the context of fund liquidity management.

The framework developed in this article may be applied as well to daily river discharge data or to portfolios of insurance contracts, in order to get a probabilistic view about the occurance of floodings or the return period of large claims, using in particular the corresponding interarrival times.

\subsection*{Acknowledgments}
We wish to thank Matthias Deege, Tom Hurd, Gerald Kroisandt, Alfred M\"uller, Bernhard Spangl, Anatoliy Swishchuk, Rudi Zagst, as well as the participants of the DGVFM Workshop "Science meets Practice" 2014 in Kaiserslautern, the Symposium Extreme Events 2014 in Hanover, the CFE 2014 Conference in Pisa, the Challenges in Derivative Markets Conference 2015 in Munich and the 9th Bachelier World Congress in New York for useful comments and discussions.
S.~Desmettre and P.~Ruckdeschel gratefully acknowledge financial support by the Volkswagen Foundation for the project ``Robust Risk Estimation'' and by the DFG within the research grant RU 893/4-1.
S.~Desmettre also thanks the DFG for funding within the Research Training Group 1932 ``Stochastic Models for Innovations in the Engineering Sciences.''
All authors thank IPConcept (Luxemburg) S.A.\ for providing the data sets used in this article.



\end{document}